\title{Mechanism Design with Delegated Bidding}
\author{Gagan Aggarwal\\
Google Research\\
\texttt{gagana@google.com}
\and
Marios Mertzanidis \\
Purdue University\\
\texttt{mmertzan@purdue.edu}
\and
Alexandros Psomas\\
Purdue University\\
\texttt{apsomas@purdue.edu}
\and
Di Wang\\
Google Research\\
\texttt{wadi@google.com}
}
\date{}
\begin{document}

\maketitle

\begin{abstract}
    We consider the problem of a designer who wants to allocate resources to representatives, that then distribute the resources they receive among the individuals they represent. Motivated by the way Feeding America, one of the largest U.S. charities, allocates donations to food banks, which then further distribute the donations to food-insecure individuals, we focus on mechanisms that use artificial currencies. We compare auctions through the lens of the Price of Anarchy, with respect to three canonical welfare objectives: utilitarian social welfare (sum of individuals' utilities), Nash social welfare (product of individuals' utilities), and egalitarian social welfare (minimum of individuals' utilities). We prove strong lower bounds on the Price of Anarchy of all auctions that allocate each item to the highest bidder (regardless of how the artificial currency budgets are distributed), subject to a mild technical constraint; this includes Feeding America's current auction, the First-Price auction. In sharp contrast, our main result shows that adapting the classic Trading Post mechanism of Shapley and Shubik to this setting, and coupled with Feeding America's choice of budget rule (each representative gets an amount of artificial currency equal to the number of individuals it represents), achieves a small Price of Anarchy for all generalized $p$-mean objectives (which include the three canonical objectives as special cases) simultaneously. Our bound on the Price of Anarchy of the Trading Post mechanism depends on $\ell$: the product of the rank and the ``incoherence'' of the underlying valuation matrix, which together capture a notion of how ``spread out'' the values of a matrix are. This notion has been extremely influential in the matrix completion literature, and, to the best of our knowledge, has never been used in auction theory prior to our work. Perhaps surprisingly, we prove that the dependence on $\ell$ is necessary: the Price of Anarchy of the Trading Post mechanism is $\Omega(\sqrt{\ell})$.
\end{abstract}

\section{Introduction}

We consider mechanism design in scenarios where every participant in the mechanism is a representative, participating on behalf of (potentially) multiple individuals who are affected by the mechanism's outcome. This situation captures a myriad of real-world scenarios, from social choice (e.g., representatives in the European Parliament vote on behalf of their respective countries' citizens) to advertising (e.g., an auto-bidder in an advertising platform bids on behalf of its campaigns). In this paper, motivated by the way Feeding America, the largest (in terms of revenue) U.S. charity~\cite{Forbes}, allocates its food, we focus on the problem of resource allocation using mechanisms with artificial currencies.

Feeding America, which distributed 5.3 billion meals through its various programs in 2023 alone~\cite{FeedingAmerica}, distributes food to partner food banks across the United States. The consumers of the food donations are, of course, not the food banks themselves, but the individuals the food banks serve. In 2005, Feeding America transitioned from a centralized distribution of food donations to a market-based approach~\cite{prendergast2022allocation}. In the current system, known as the Choice System, each food bank is given an amount of artificial currency (called ``shares'') equal to the population it serves, which is used to submit bids for available (truckloads of) food donations; importantly, a bid here is a single number, despite the fact that the underlying population served might have heterogeneous preferences for the donation.
Given these bids, Feeding America then runs an auction --- the current choice is a First-Price auction --- and the food bank with the highest bid on a donation receives it.\footnote{For small food banks some additional measures are used to address various issues, including joint bidding and short-term credit.} Once a food bank receives food donations via Feeding America's auction, it then allocates these donations among the food-insecure individuals it serves. 

Simply put, the goal of this paper is to inform the choice of auction for the designer (in our motivating example, Feeding America).
We introduce a simple model for studying this question, and compare auctions through the lens of the Price of Anarchy, with respect to the three canonical welfare objectives: utilitarian social welfare (sum of individuals' utilities), Nash social welfare (product of individuals' utilities), and egalitarian social welfare (minimum of individuals' utilities).

\subsection{Our contribution}

We consider the problem of a designer that wants to allocate $m$ divisible items to $C$ strategic centers, where each center $c \in [C]$ represents a set of individuals $I_c$ with additive preferences. The items are allocated to the centers via a mechanism that uses an artificial currency. Such mechanisms are prevalent in theory and practice~\cite{prendergast2022allocation,gorokh2021monetary,budish2017course, walsh2014allocation} because of their simplicity.
Each center receives a \emph{budget} of artificial currency, that depends on the number of centers and their size (i.e., on $|I_1|, \dots, |I_C|$) which it uses to bid on items.
When center $c$ receives a (fractional) allocation of the items, it proceeds to further distribute them among its set of individuals $I_c$ in a manner that maximizes a canonical welfare objective (with respect to individuals in $I_c$); in this paper we focus on the utilitarian social welfare (sum of individuals' utilities), Nash social welfare (product of individuals' utilities), and egalitarian social welfare (minimum of individuals' utilities). We assume that the designer has the same objective as the centers (utilitarian, Nash, or egalitarian social welfare), and is interested in maximizing this objective over all individuals in the system, i.e., $\cup_{c \in C} I_c$. We evaluate different mechanisms by their (pure) Price of Anarchy: the ratio of the optimal objective and the value of the objective in the worst pure Nash equilibrium of the mechanism.

Many natural mechanisms have unnatural equilibria, with a huge Price of Anarchy. For example, a Second-Price auction has an equilibrium where some bidder submits an infinite bid on every item, and all other bidders submit a bid of zero. To rule out such bizarre equilibria we consider mechanisms that satisfy the following constraint: if for a center $c \in [C]$ the sum of bids on the items where $c$ is the highest-bidder (possibly tied for the highest-bid) exceeds the budget allocated to center $c$, then center $c$ does not receive any items. This constraint is satisfied ipso facto by many prominent mechanisms, e.g., the First-Price auction and the All-Pay auction.

We start by showing that a wide family of mechanisms we call ``Highest-Bidder-Wins'' auctions (which captures the First-Price auction, the current auction Feeding America uses in the Choice System) have an unbounded Price of Anarchy for egalitarian and Nash welfare (\Cref{thm: first price lower bound p less than 0}). As the name suggests, these mechanisms allocate each item to the center that bids the highest for it (with an arbitrary tie-breaking rule). Our lower bound holds for all \emph{generalized $p$-mean welfare} objectives for $p \leq 0$, and no matter what rule is used to allocate budgets to centers. 
The generalized $p$-mean of $k$ numbers $x_1, \dots, x_k$ is equal to $\left( \frac{1}{k} \sum_{i \in [k]} x_i^p \right)^{1/p}$.
The $p$-mean family captures a wide range of objectives, including utilitarian social welfare ($p=1$), egalitarian social welfare ($p \rightarrow \infty$), and Nash social welfare ($p=0$).
For the case of utilitarian social welfare, we prove a lower bound of $\Omega(\sqrt{\ell})$ on the Price of Anarchy (\Cref{thm: first price lower bound p is 1}), where $\ell$ is the product of the rank and the incoherence of the underlying valuation matrix (which we explain shortly).

We proceed to study the Price of Anarchy of a different mechanism: the Trading Post mechanism. In the Trading Post mechanism,  originally introduced by Shapley and Shubik~\cite{shapley1977trade} and recently studied in a similar context to ours by Branzei et al.~\cite{branzei2022nash},  
each center $c$ receives a fraction of an item proportional to the amount it bid on it. That is, center $c$ receives a fraction $\frac{b_{j,c}}{\sum_{c' \in [C]} b_{j,c'}}$ of item $j$, where $b_{j,z}$ is the bid of center $z$ on item $j$. We study this mechanism under proportional budgets (each center $c$ receives a budget of $|I_c|$ of an artificial currency), and an all-pay payment rule (thus, the sum of bids is always at most the budget). We prove that the Price of Anarchy of the Trading Post mechanism with proportional budgets, and with respect to all $p$-mean objectives for $p \leq 1$, is at most $\left( 1 + \max_{c \in [C]} \frac{|I_c|}{n} \right) \ell \leq 2 \ell$, where $n = \sum_{c \in [C]} |I_c|$ is the total number of individuals, and $\ell = r \cdot \mu_0$ is the product of the \emph{rank} and the \emph{incoherence} of the underlying valuation matrix (\Cref{theorem: trading post upper bound}); for the case of Nash social welfare ($p=0$) we show an improved upper bound of $2$ (\Cref{theorem:nsw poa}).
Intuitively, small $\ell$ implies that the values of a matrix are ``nicely spread out.'' For example, if $\ell$ is small, we do not have a few values that are significantly bigger than the rest of the values in the matrix. The product of the rank and the incoherence is common in the Randomized Linear Algebra literature, and specifically in the context of Matrix Completion~\cite{candes2012exact,chen2015incoherence,kelner2023matrix}. Rather surprisingly, we prove that small $\ell$ is not only sufficient, but also necessary! We prove a lower bound of $\Omega(\sqrt{\ell})$ on the Price of Anarchy of the Trading Post mechanism with proportional budgets for utilitarian social welfare (\Cref{theorem: welfare poa lower bound}) and egalitarian social welfare (\Cref{theorem: maximin lower bound}), noting that we do not lower bound the Price of Anarchy for Nash social welfare, since the upper bound is already a small constant. Importantly, our lower bounds for $p \leq 0$ on the Price of Anarchy of ``Highest-Bidder-Wins'' auctions hold even for instances where $\ell$ is as small as $2$ (and therefore the comparison with the Trading Post mechanism is accurate).

Overall, a concrete take-home message of our results is that a minor tweak in the allocation rule of Feeding America's auction, while keeping the current choice of budget allocation, drastically improves the Price of Anarchy --- reducing it from infinite to finite --- for all generalized $p$-mean objectives for $p \leq 0$. For utilitarian social welfare ($p=1$), our results show that the performance of the First-Price auction is significantly worse when individuals do not represent themselves: the Price of Anarchy in the presence of direct representation is a small constant, even for subadditive valuations (see Related work), while in our problem it is $\Omega(\sqrt{\ell})$. And, while the Trading Post mechanism shows a similar trend in its Price of Anarchy for this objective,\footnote{Whether a First-Price auction or the Trading Post mechanism achieves a better Price of Anarchy for utilitarian social welfare in our setting is left as an interesting open problem.} our results show that the Trading Post mechanism offers more robust guarantees across different objectives, highlighting its potential as a superior alternative to the First-Price auction in this context.

\vspace{-2mm}
\subsection{Related Work}

\paragraph{The Trading Post Mechanism.} The Trading Post Mechanism, originally introduced by Shapley and Shubik~\cite{shapley1977trade}, has been studied under various names, such as the proportional sharing mechanism~\cite{zhang2005efficiency,feldman2008proportional,christodoulou2015efficiency,gkatzelis2021fair}. 
Johari and Tsitsiklis~\cite{johari2004efficiency} prove that the Price of Anarchy of the Trading Post mechanism, with respect to utilitarian social welfare, is at most $4/3$ for agents with concave valuations. The most relevant work to ours is by Branzei et al.~\cite{branzei2022nash}, who study the Price of Anarchy of the Fisher market mechanism and the Trading Post mechanism, with respect to the Nash social welfare. Branzei et al. consider scenarios where valuations are either perfect substitutes or perfect complements, and show that the Fisher market mechanism has a constant Price of Anarchy for perfect substitutes. However, for perfect complements, this factor degrades linearly with the number of agents. On the other hand, they show that the Trading Post mechanism has a constant Price of Anarchy for any concave valuation function.

\paragraph{Price of Anarchy of Mechanisms with Money.} 
A rich body of literature proves desirable properties of simple mechanisms at their equilibria. 
For example, Daskalakis et al.~\cite{daskalakis2022multi} show that, in multi-parameter settings, the better of simultaneous auctions with entry fees and those with reserved prices are approximately revenue-optimal for additive bidders. Cai et al.~\cite{cai2023simultaneous} extend these findings to subadditive bidder valuations. For the case of welfare maximization, Christodoulou et al.~\cite{christodoulou2016bayesian} prove that simultaneous Second-Price auctions have a Price of Anarchy of 2 for submodular bidders. Similarly, Bhawalkar and Roughgarden~\cite{bhawalkar2011welfare} show that the pure Nash equilibria of simultaneous Second-Price auctions have a Price of Anarchy of at most $2$ for subadditive bidders. Feldman et al.~\cite{feldman2013simultaneous} prove similar results for Bayes-Nash equilibria, for both First-Price and Second-Price auctions.


\paragraph{Mechanisms with Artificial Currencies.} A notable subset of mechanisms without money is mechanisms that use artificial currencies. Such mechanisms are appealing due to their simplicity and have been adopted in many real-world settings (\cite{prendergast2022allocation,budish2017course, walsh2014allocation}). Budish~\cite{budish2011combinatorial} studies scenarios with arbitrary ordinal preferences over items, employing relaxed notions of incentive compatibility to achieve Pareto optimality, approximate market clearing, and envy-freeness. 
Gorokh et al.~\cite{gorokh2021monetary} propose a framework that transforms a truthful mechanism with monetary transfers into a truthful dynamic mechanism with no payments, utilizing artificial currencies. In this context, dynamic fair allocation with artificial currencies have also been studied by Gorokh et. al.~\cite{gorokh2021remarkable}, who prove strong performance guarantees of repeated simultaneous First-Price auctions. Finally, Banerjee et al.~\cite{banerjee2023Robust} study such mechanisms in public resource allocation.

\paragraph{Matrix Completion.} 
Matrix Completion is the fundamental problem of reconstructing a matrix given only a few of its inputs. This problem was introduced by Candes and Recht~\cite{candes2012exact} as a candidate solution to the Netflix challenge~\cite{bennett2007kdd}, and has since been extensively studied~\cite{candes2012exact,isinkaye2015recommendation,chen2015incoherence,kelner2023matrix}.

\section{Preliminaries}

We consider the problem of allocating $m$ divisible items among a set $N$ of $|N| = n$ individuals with additive preferences. An allocation $y \in [0,1]^{m n}$ is a partition of items to individuals. An allocation $y$ is feasible if for all $j \in [m]$, $\sum_{i \in N} y_{j,i} \leq 1$, where $y_{j,i}$ is the fraction of item $j$ allocated to individual $i$. Let $\Fcal$ be the set of all feasible allocations.
Each individual $i \in N$, has a value $A_{j,i} \in \R^+$ for item $j \in [m]$.\footnote{We use the notation $A$ for valuations instead of $v$, because (i) we treat the valuations as matrices for the majority of the analysis (therefore, a capital letter is appropriate), and (ii) our analysis uses singular value decompositions (therefore $U$, $\Sigma$, and $V$ are taken).}
Individuals have additive preferences, i.e., the utility of an individual $i \in N$ for an allocation $y$ is $u_{i}(y) = \sum_{j \in [m]} A_{j,i} y_{j,i}$. It will be helpful to think the valuations as an $m \times n$ matrix $A$, where $A_{j,i}$ is the value of individual $i \in [n]$ for item $j \in [m]$, $A^{(i)}$ is the $i^{th}$ column vector, and $A_{(j)}$ is the $j^{th}$ row vector. 
We work with non-scale free objectives (such as welfare), therefore we will assume throughout the paper that individuals' utilities are normalized. Specifically, we assume that the valuation vector of individual $i$ has \emph{Euclidean norm} equal to $1$, i.e., $\norm{A^{(i)}}_2 = 1$, for all $i \in N$.

In our problem, individuals cannot directly participate in a mechanism. There are $C$ centers, and each individual $i \in N$ is represented by exactly one center $c \in [C]$. Let $I_c$ be the set of individuals that center $c$ represents; $N = \cup_{c \in [C]} I_c$, i.e., every individual has a representative center. We write $A_{j,(c,i)} \in \R^+$ for the value of individual $i \in I_c$ for item $j \in [m]$, and  and $u_{(c,i)}(y)$ for the utility of individual $i \in I_c$ for allocation $y$.
The utility of a center $c \in [C]$ for an allocation $y$ is the Generalized $p$-Means Welfare, 
$GMW_{c,p}(y) = \left( \frac{1}{|I_c|} \sum_{i \in I_c} u_{c,i}(y)^p \right)^{1/p}$, for some $p \le 1$. We assume that all centers have the same Generalized $p$-Means Welfare objective (i.e. agree on the choice of $p$).
A problem instance $\Ical = (\{I_c\}_{c \in [C]}, A, p)$ is therefore characterized by the sets of individuals $\{I_c\}_{c \in [C]}$, their valuations $A \in \R^{m n}$, and $p \leq 1$.

\vspace{2mm}
\noindent \textbf{Mechanisms.} We allocate items via a mechanism $\Mcal$ that centers participate in. A mechanism $\Mcal$ is characterized by an allocation rule $x^{\Mcal}(\cdot): \R^{mC} \rightarrow  [0,1]^{mC}$, a ``budget rule'' $g^{\Mcal}(\cdot): \R^{C}_+ \rightarrow  \R^{C}_+$, and a ``virtual payment rule'' $q^{\Mcal}(\cdot): \R^{mC} \rightarrow  \R^{C}$.
The allocation rule elicits from each center $c \in [C]$ a bid $b_{j,c}$ for item $j$ and outputs an allocation of items to centers; the virtual payment rule elicits bids and outputs a (virtual) payment for each center.  The budget rule $g^{\Mcal}(|I_1|, \dots, |I_{C}|)$ maps centers' population to a vector of budgets, one for each center. If the payment of a center $c$ exceeds its budget $g^{\Mcal}_c$, we assume that the center is not allocated any items. Given an allocation of the items, each center divides its items among the individuals it represents in a way that the Generalized $p$-Means Welfare is maximized. It will be convenient to equivalently think of $x^{\Mcal}$ as eliciting, for each individual $i \in I_c$ and each item $j \in [m]$, a bid $b_{j,(c,i)}$ that center $c$ chooses on the individual's behalf, and outputting an allocation $x(b)$ of items to individuals.
We assume that centers are strategic: a center $c \in [C]$, given the bids of other centers $b_{-c}$,  will report a bid $b_c$ that maximizes its utility (i.e. $b_c = \argmax_{b \in \R^{m|I_c|}} GMW_c(x(b ; b_{-c}))$). 

\vspace{2mm}
\noindent \textbf{Bidding Restrictions.} 
Of course, we'd like to prove Price of Anarchy bounds for every mechanism. However, many natural mechanisms have unnatural equilibria with a huge Price of Anarchy. For example, a Second-Price auction has an equilibrium where some bidder submits an infinite bid on every item, and all other bidders submit a bid of zero. In order to rule out such bizarre equilibria we consider mechanisms that satisfy the following constraint: if, for a center $c \in [C]$, the sum of bids on the items where $c$ is the highest-bidder (possibly tied) exceeds the budget of center $c$, then center $c$ does not receive any items. Formally, if $\sum_{j \in [m]} b_{j,c} \cdot 1\{ b_{j,c} \geq b_{j,c'}, \forall c' \in [C] \} > g^{\Mcal}_c(|I_1|, \dots, |I_C|)$, center $c$ does not receive any items. This constraint is satisfied by many prominent mechanisms, e.g., the First-Price auction and the All-Pay auction.

\vspace{2mm}
\noindent \textbf{Highest-Bidder-Wins Auctions.} In a Highest-Bidder-Wins auction with budget rule $g$, denoted by $\text{Highest-Bidder-Wins}_g$, each center $c \in [C]$ gets a budget of $g_c(|I_1|, \dots, |I_C|)$.
Then, the allocation rule dictates that item $j$ is allocated to the center with the highest bid, i.e., center $c$ gets the entirety of item $j$ if $b_{j,c} > b_{j,c'}$, for all centers $c' \neq c$. Our negative results hold for any tie-breaking rule when there are ties for the highest bid. If no center bids for an item, then the item is not allocated.

\vspace{2mm}
\noindent \textbf{The Trading Post Mechanism.}
In the Trading Post mechanism with proportional budgets, each center $c \in [C]$ gets a budget of $g_c(|I_1|, \dots, |I_C|) = |I_c|$. Then, the amount of item $j$ allocated to center $c$ is $b_{j,c}/\sum_{c' \in [C]} b_{j,c'}$, where $b_{j,c'}$ is the amount that center $c'$ bids on item $j$. It will be convenient to think of $b_{j,(c,i)}$, the amount that center $c$ bids on behalf of agent $i \in I_c$ for item $j$. The virtual payment rule is ``all-pay,'' i.e., center $c$ pays $\sum_{j \in [m]} b_{j,c}$.

\vspace{2mm}

\noindent \textbf{Matrix Coherence.}
Let $A$ be an $r$-rank matrix. Then $A$ can be broken down into its singular value decomposition, $A = U \Sigma V^T$ where $U$ and $V$ are $m \times r$, $|N| \times r$  orthonormal matrices respectively.
The following definition is a measure of how ``spread out'' the values of a matrix are.

\begin{definition}[Coherence]
    Let $U$ be a subspace of $\R^m$ of dimension $r$ and $\mathbf{P}_U$ be the orthogonal projection onto $U$. Then the coherence of $U$ is defined as $\mu(U) = \frac{m}{r} \max_{i \in [m]} \norm{\mathbf{P}_U \cdot  e_i}^2_2$, where $e_i$ is the $i^{th}$ standard basis vector.
\end{definition}

\begin{definition}[$\mu_0$-incoherence]\label{definition:incoherence}
    We say that a valuation matrix $A$ with a singular value decomposition of $U \Sigma V^T$, is $\mu_0$-incoherent if $\mu(U) \le \mu_0$.
\end{definition}

Intuitively, the smaller the product of the rank and the incoherence of a matrix, the more ``spread out'' the information is. For example, if this product is small, we do not have a few values that are significantly bigger than the rest of the values in the matrix. A valuation matrix with small rank and incoherence, is such that there are no individuals that have high values only for a small subset of the items. This``spread out'' property proves valuable in establishing desirable properties of equilibria, as it ensures significant competition across all items, while allowing various allocations to achieve near-optimal results. Conversely, when individuals value only a small subset of items, it becomes possible to construct pathological instances where some individuals are left severely under-served. We note that our definition of incoherence is a relaxed version of the standard definition used in the matrix completion literature, which also requires $\mu(V) \le \mu_0$~\cite{candes2012exact}. Throughout the paper, we use $\ell = \mu_0 \, r$ for the product of the rank and the incoherence of a matrix.

\vspace{2mm}
\noindent \textbf{Price of Anarchy.}
Our goal is to design a mechanism $\Mcal$ that maximizes the Generalized $p$-Means Welfare of individuals, $GMW_p(y) = \left( \frac{1}{|N|} \sum_{c \in [C]} \sum_{i \in I_c} u_{(c,i)}(y)^p \right)^{1/p}$. 
We measure the quality of a mechanism $\Mcal$ via its Price of Anarchy (PoA)~\cite{koutsoupias2009worst}: the ratio of the optimal Generalized $p$-Means Welfare divided by the Generalized $p$-Means Welfare of $\Mcal$, in the worst-case \emph{pure} Nash equilibrium.
We say that $\Ical \in \mathcal{W}_{\ell}$ if the valuation matrix (in matrix form) is $\mu_0$-incoherent and has rank $r$ such that $\ell = \mu_0 \, r$.
A problem instance $\Ical$ and some mechanism $\Mcal$ yield a set of pure Nash equilibria $E$. Let $y^* \in \argmax_{y \in \Fcal} GMW_p(y)$ be the allocation that maximizes our objective. Then the price of anarchy of our mechanism $\Mcal$ is
$
PoA_{p, \ell}(\Mcal) = \max_{\Ical \in \mathcal{W}_{\ell}, \tilde{y} \in E} \left\{ \frac{GMW_p(y^*)}{GMW_p(\tilde{y})}\right\}.
$


\section{Negative Results for Highest-Bidder-Wins Auctions}

In this section, we prove that Highest-Bidder-Wins auctions have a large Price of Anarchy with respect to the $p$-mean welfare, no matter how the designer chooses to allocate budgets to the centers. For $p \leq 0$, the PoA is unbounded, even when $\ell$, the product of the rank and the incoherence of the underlying valuation matrix, is as small as $2$; for $p=1$, the PoA is at least $\Omega(\sqrt{\ell})$. All missing proofs are deferred to~\Cref{appendix: missing}.

The following two lemmas are crucial in our lower-bound constructions (both for Highest-Bidder-Wins auctions, as well as for the Trading Post mechanism). In the proofs we explicitly calculate the incoherence and rank of certain (complicated) matrices.

\begin{lemma}\label{lemma:welfare example incoherence}
    Let $A$ be an $m \times n$ matrix, and $k \ge 1$ some integer that perfectly divides $m$. Let for all $i \in [m/k]$, and $j \in \{(i-1) \, k+1, \cdots , i \, k\}$, $A_{j,i} = 1/\sqrt{k}$, while for $j\le (i-1) \, k$ and $j> i \, k$, $A_{j,i} = 0$. For $i \in \{m/k+1, \cdots, n\}$, and $j \in [m]$, $A_{j,i} = 1/\sqrt{m}$. Then $A$ has rank $m/k$ and is $1$-incoherent.
\end{lemma}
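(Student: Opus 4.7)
The plan is to construct an explicit orthonormal basis for the column space of $A$, use it to read off the rank, and then plug into the definition of coherence. Since coherence depends only on the subspace (equivalently, on the projection operator $P_U$), there is no need to write down the full singular value decomposition; any orthonormal basis of the column space of $A$ will do.

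First I would examine the first $m/k$ columns. For $i \in [m/k]$, the column $A^{(i)} = \tfrac{1}{\sqrt{k}}\sum_{j=(i-1)k+1}^{ik} e_j$ is supported on a block of $k$ consecutive rows, and these blocks are disjoint across $i$. Hence $\{A^{(1)},\dots,A^{(m/k)}\}$ is an orthonormal set in $\R^m$. Next I would verify that every remaining column lies in this span: for $i > m/k$, $A^{(i)} = \tfrac{1}{\sqrt{m}}\mathbf{1}_m$, and the identity $\mathbf{1}_m = \sqrt{k}\sum_{i=1}^{m/k} A^{(i)}$ gives $A^{(i)} = \sqrt{k/m}\sum_{i'=1}^{m/k} A^{(i')}$. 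Therefore the column space of $A$ is exactly $U := \mathrm{span}\{A^{(1)},\dots,A^{(m/k)}\}$, so $\mathrm{rank}(A) = m/k$, as claimed.

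Second, I would compute $\|P_U e_j\|_2^2$ for each standard basis vector $e_j$, $j \in [m]$. Because $\{A^{(1)},\dots,A^{(m/k)}\}$ is an orthonormal basis of $U$, $\|P_U e_j\|_2^2 = \sum_{i=1}^{m/k}\langle A^{(i)}, e_j\rangle^2$. For a given $j$, exactly one index $i$ satisfies $(i-1)k < j \le ik$, and for that $i$ we have $\langle A^{(i)}, e_j\rangle = 1/\sqrt{k}$, while all other inner products vanish. Thus $\|P_U e_j\|_2^2 = 1/k$ uniformly in $j$. Substituting into the definition of coherence yields $\mu(U) = \tfrac{m}{r}\max_{j}\|P_U e_j\|_2^2 = \tfrac{m}{m/k}\cdot\tfrac{1}{k} = 1$, so $A$ is $1$-incoherent.

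There is essentially no obstacle here beyond bookkeeping: the construction is designed so that the first $m/k$ columns already form a block-orthonormal basis of the column space, which trivializes both the rank computation and the projection calculation. The only point requiring care is justifying that the coherence can be read off from any orthonormal basis of $U$ (rather than specifically the left singular vectors), which follows from the fact that $P_U$ is basis-independent.
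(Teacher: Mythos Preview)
Your proof is correct and takes a genuinely simpler route than the paper's. The paper computes the full singular value decomposition: it first finds the eigenvalues of $A^T A$ via a lengthy determinant calculation, then computes the left singular vectors as eigenvectors of $AA^T$, obtaining a complicated orthonormal matrix $U$ with entries of the form $\sqrt{\frac{m-ik}{k(m-(i-1)k)}}$, and finally shows $\norm{U^T e_i}_2^2 = 1/k$ through a telescoping identity. You instead exploit the observation that the first $m/k$ columns of $A$ are themselves an orthonormal basis for the column space, which immediately gives the rank and reduces the projection calculation to a one-line inner-product count. The key insight enabling this shortcut, which you correctly flag, is that the coherence $\mu(U)$ in the paper's Definition~1 depends only on the subspace (via $\mathbf{P}_U$) and not on the particular orthonormal basis chosen, so the left singular vectors are not actually needed. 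Your approach buys a dramatically shorter argument with no loss of rigor; the paper's approach, while more laborious, has the side benefit of exhibiting the SVD explicitly, though that information is never used elsewhere.
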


\begin{lemma}\label{lemma:maxmin example incoherence}
    Let $A$ be an $m \times n$ matrix, and $k \ge 1$ some integer that perfectly divides $m$. Let for all $j \in [k]$ $A_{j,1} = 1/\sqrt{k}$, while for all $k+1 \le j \le m$, $A_{j,1} = 0$. Also, for all $i \in [n] - \{1\}$, and all $j \in [m]$ $A_{j,i} = 1/\sqrt{m}$. Then $A$ has rank $2$, and is $\frac{m}{2k}$-incoherent.
\end{lemma}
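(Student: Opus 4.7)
The plan is to exhibit an orthonormal basis for the column space of $A$ and then directly evaluate the norms of the projected standard basis vectors. First, I would observe that $A$ has exactly two distinct columns: call them $v_1$ (column $1$, supported on the first $k$ coordinates with value $1/\sqrt{k}$) and $v_2$ (any other column, uniform over all $m$ entries with value $1/\sqrt{m}$). Both are unit-norm by construction. Since $v_1$ is supported on a proper subset of coordinates (as $k<m$) while $v_2$ is not, the two are linearly independent, so the column span of $A$ has dimension exactly $2$, establishing the rank claim.

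Next, I would Gram--Schmidt orthogonalize $\{v_1, v_2\}$ to obtain an orthonormal basis $\{\hat{u}_1, \hat{u}_2\}$ for the column space $U$. Taking $\hat{u}_1 = v_1$ and computing $\langle v_1, v_2\rangle = k\cdot \frac{1}{\sqrt{k}}\cdot\frac{1}{\sqrt{m}} = \sqrt{k/m}$, the residual $v_2 - \sqrt{k/m}\,v_1$ vanishes on the first $k$ coordinates (since $\frac{1}{\sqrt{m}}-\sqrt{k/m}\cdot\frac{1}{\sqrt{k}}=0$ there) and equals $1/\sqrt{m}$ on the remaining $m-k$ coordinates; its norm is $\sqrt{(m-k)/m}$, so the normalized vector $\hat{u}_2$ has entries $1/\sqrt{m-k}$ on its support and $0$ elsewhere.

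With this orthonormal basis in hand, the projection satisfies $\|P_U e_j\|_2^2 = \langle \hat{u}_1, e_j\rangle^2 + \langle \hat{u}_2, e_j\rangle^2$, which evaluates to $1/k$ for $j\in[k]$ and to $1/(m-k)$ for $j\in\{k+1,\dots,m\}$. Under the (implicit) regime of interest $k\le m/2$, we have $1/k \ge 1/(m-k)$, so $\max_{j\in[m]}\|P_U e_j\|_2^2 = 1/k$. Substituting into $\mu(U) = \frac{m}{r}\max_j \|P_U e_j\|_2^2$ with $r=2$ gives $\mu(U) = m/(2k)$, matching the claimed incoherence bound $\mu_0 = m/(2k)$.

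I do not anticipate a genuine obstacle: the argument is a direct SVD/orthogonalization computation. The only mildly delicate step is the Gram--Schmidt calculation and verifying that the two candidate projection norms ($1/k$ versus $1/(m-k)$) are correctly compared, which is why I flag the assumption $k\le m/2$ explicitly.
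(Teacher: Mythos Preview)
Your proof is correct and takes a genuinely different, more elementary route than the paper. The paper computes the full singular value decomposition of $A$: it finds the nonzero eigenvalues of $A^TA$ by expanding the characteristic polynomial via a long sequence of row operations, then computes the left singular vectors as eigenvectors of $AA^T$, and finally uses an algebraic identity relating $\lambda_1,\lambda_2$ to simplify $\|U^Te_1\|_2^2$ down to $1/k$. You instead exploit the fact that the coherence $\mu(U)$ depends only on the column space of $A$ (since $\mathbf{P}_U$ is determined by the subspace, not by the choice of orthonormal basis), so any orthonormal basis will do; a one-step Gram--Schmidt on the two distinct columns then gives block-constant basis vectors whose squared row norms are read off immediately. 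Your approach is shorter and avoids all eigenvalue computations, while the paper's approach yields the actual singular values as a byproduct (not needed for the lemma). Your observation that $k\mid m$ together with $k<m$ forces $k\le m/2$ cleanly handles the comparison $1/k\ge 1/(m-k)$, so the flagged assumption is in fact automatic.
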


\begin{theorem}\label{thm: first price lower bound p less than 0}
For all  $p \le 0$ and all budget rules $g: \mathbb{R}_+^C \rightarrow \mathbb{R}_+^C$, $PoA_{p,2}(\text{Highest-Bidder-Wins}_g) \rightarrow \infty$.
\end{theorem}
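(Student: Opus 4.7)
The plan is to exploit the fact that for $p\le 0$ the generalized $p$-mean welfare collapses to $0$ as soon as any single individual receives $0$ utility, so it suffices to construct, for the given budget rule $g$ (and any deterministic tie-breaking rule), an instance with $\ell=2$ together with a pure Nash equilibrium of $\text{Highest-Bidder-Wins}_g$ in which some individual receives $0$ utility while the optimum is strictly positive.

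For the valuations I would use the matrix of Lemma~\ref{lemma:maxmin example incoherence} with $k=m/2$, which has rank $2$ and incoherence $1$, so $\ell=2$: a single ``special'' individual $i^{\star}$ valuing only items $\{1,\ldots,m/2\}$ (each at $\sqrt{2/m}$) and $n-1$ ``uniform'' individuals valuing every item at $1/\sqrt{m}$. I would then take $C=m+1$ centers with $|I_c|=1$ each, so the budget vector is $(g_1,\ldots,g_{m+1})=g(1,\ldots,1)$, and put one individual per center. The punchline is a pigeonhole argument: with $m+1$ centers competing for $m$ items, in any pure profile at most $m$ centers can be the (unique) highest bidder on any item, so at least one center wins nothing---whoever that center is, its lone individual receives $0$ utility and the designer's $p$-mean welfare is $0$.

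Concretely, I would exhibit a pure Nash equilibrium in which the losing center $c^{\star}$ is pre-selected. Pick $c^{\star}$ to be a center of minimum budget, breaking ties in budget against the given tie-breaking rule: among centers sharing the minimum, select one so that any $2$-way tie between it and another minimum-budget center is resolved \emph{against} it. Place a uniform individual in $c^{\star}$, and put $i^{\star}$ in some other center $c^{\#}$ whose assigned item will lie in $\{1,\ldots,m/2\}$. The profile is: every $c\ne c^{\star}$ bids its entire budget $g_c$ on a distinct item $\sigma(c)$, where $\sigma:[C]\setminus\{c^{\star}\}\to[m]$ is a bijection with $\sigma(c^{\#})\in\{1,\ldots,m/2\}$; center $c^{\star}$ bids $0$ everywhere. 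Since each $c\ne c^{\star}$ is the unique highest bidder on item $\sigma(c)$ with winning-bid sum equal to $g_c$, no $c$ can profitably win a second item (any additional winning bid pushes the sum above $g_c$ and triggers the bidding-restriction penalty, yielding $0$ utility). Center $c^{\star}$ also cannot profitably deviate: any bid above $g_{c^{\star}}$ violates its budget, while any bid at most $g_{c^{\star}}$ is at most the incumbent's bid $g_{\sigma^{-1}(j)}\ge g_{c^{\star}}$ and loses (strictly when the inequality is strict, and via the tie-break when equality holds, by the choice of $c^{\star}$). Hence the profile is a pure Nash equilibrium, the uniform individual in $c^{\star}$ receives $0$ utility, and since the optimum is strictly positive (e.g., the allocation that hands each individual one valued item), $\operatorname{PoA}_{p,2}(\text{Highest-Bidder-Wins}_g)=\infty$.

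The main obstacle is the interplay between degenerate budget vectors (several centers tied at the minimum) and the arbitrary tie-breaking rule, because the choice of $c^{\star}$ requires that the tie-break disfavor $c^{\star}$ against every other minimum-budget center. This is handled by the definition of $c^{\star}$ above whenever such a center exists---which is the generic case and, more generally, any tournament on the minimum-budget set that has a ``sink.'' For the remaining pathological rules (cyclic tournaments on the minimum-budget set), the argument is rescued by perturbing the sizes---e.g., replacing a single $|I_c|=1$ by $|I_c|=2$---to break the budget degeneracy, which affects neither the $\ell=2$ computation nor the pigeonhole.
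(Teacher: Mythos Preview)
Your proposed profile is not a pure Nash equilibrium in general, and the gap is not in the tie-breaking corner case but in the main argument. You claim that ``no $c$ can profitably win a second item (any additional winning bid pushes the sum above $g_c$).'' This is false: since $c$ is the \emph{only} bidder on $\sigma(c)$, center $c$ can lower its bid there to any $\epsilon>0$ and still win $\sigma(c)$, freeing up almost its entire budget. Concretely, take any center $c\neq c^{\star}$ whose budget strictly exceeds that of some other item-holding center $c'$ (such a pair exists whenever the budgets among the $m$ item-holders are not all equal). Then $c$ deviates to bid $\epsilon$ on $\sigma(c)$ and $g_{c'}+\delta$ on $\sigma(c')$; the winning-bid sum is $\epsilon+g_{c'}+\delta<g_c$, so $c$ now wins two items instead of one, a strict improvement for its uniform individual. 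Thus your profile is not an equilibrium whenever the item-holding centers have non-constant budgets. And in the complementary case where all those budgets coincide with $g_{c^{\star}}$, you yourself note the tie-breaking obstruction; your proposed fix of perturbing $|I_c|$ does not help, because $g$ is an \emph{arbitrary} function of the size vector and may return identical budgets regardless of the sizes.

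The missing idea---and the point of contrast with the paper's proof---is to make the ``victim'' center stuck at utility $0$ \emph{for every deviation}, rather than merely excluded by pigeonhole. The paper does this by giving the minimum-budget center \emph{two} individuals with disjoint supports (first half vs.\ second half of the items), so that this center's $p$-mean utility is $0$ unless it wins items from both halves. Since winning two items each held by full-budget centers forces a winning-bid sum of at least twice the minimum budget, that center can never improve on $0$, and the profile where every center dumps its entire budget on one distinct item is a genuine equilibrium (other centers cannot improve either, by exactly the reasoning you attempted). Your single-individual-per-center construction cannot replicate this, because a center representing one uniform individual is happy with \emph{any} positive fraction of \emph{any} item, so it always has an incentive to fight.
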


\begin{proof}
    Towards a contradiction, assume that the PoA of a Highest-Bidder-Wins auction coupled with some tie-breaking rule --- let $\Mcal$ denote this auction --- is bounded.

    Consider the following example where $C=m$. Each center $c \in [C]$ represents exactly two individuals. Without loss of generality, since all centers have the same number of individuals, assume that $g^{\Mcal}_1(2, \cdots, 2) \in \argmin_{c \in [C]} g^{\Mcal}_c(2, \cdots, 2)$.
    Individual $1$ of center one like the first $m/2$ items equally, i.e., $v_{(1,1)} = [\sqrt{2/m}, \ldots, \sqrt{2/m}, 0, \ldots , 0]^T$; individual $2$ of center one likes the last $m/2$ items equally, i.e., $v_{(1,2)} = [0, \ldots , 0, \sqrt{2/m}, \ldots, \sqrt{2/m}]^T$. The remaining individuals value all items equally, i.e., $\forall c \in [C]/\{1\}, i \in I_c, v_{(c,i)} = [\frac{1}{\sqrt{m}}, \ldots, \frac{1}{\sqrt{m}}]^T$. The induced valuation matrix is precisely the valuation matrix of~\Cref{lemma:welfare example incoherence} for $k = m/2$; therefore, we have that $r=2$ and $\mu_0 = 1$, and thus $\ell = 2$, for the induced valuation matrix.  

    Any outcome that results in some individual getting zero utility is arbitrarily bad, for any $p \leq 0$, since $GMW_p(.)$ would be equal to zero, and an optimal with respect to $GMW_p(.)$ solution has strictly positive value. Thus, in order for our instance to have a bounded PoA, center one must receive (partially) at least one item $j \in \{1, \cdots, m/2\}$ and one item $j' \in \{m/2 + 1, \cdots, m\}$. For this to occur in a Highest-Bidder-Wins auction, it must be that center one has the highest bid on item $j$ and on item $j'$, noting that if the allocation of $j$/$j'$ is fractional, then some other center is tied with center one for the highest bid, and the fraction depends on the specific tie-breaking rule. Note that it must be the case that center one has a strictly positive bid on both items, and thus each individual bid must be strictly less than its budget.

    \noindent \textbf{Case 1:} Assume that there exists a center $c' \geq 2$ such that $g^{\Mcal}_{c'}(2, \cdots, 2)> g^{\Mcal}_{1}(2, \cdots, 2)$. Consider an arbitrary equilibrium of $\Mcal$. It must be the case that in this equilibrium center $c'$ receives a total fraction of items of at least two, otherwise, $c'$ could outbid center one on both items $j$ and $j'$ (that has the current highest bid on those items, by the bounded PoA assumption) and receive them both. However, since center $c'$ receives a total fraction of items of at least two and center one receives strictly positive fractions of some items, there must exist a center $c^*$ that receives a total fraction of strictly less than one item. However, $c^*$ could outbid center one on item $j$ or $j'$ (since center one has the smallest budget) and receive it in full, contradicting the equilibrium property. 
    
    \noindent \textbf{Case 2:} Assume that $\forall c \in [C]$, $g^{\Mcal}_{c}(2, \cdots, 2) = g^{\Mcal}_{1}(2, \cdots, 2)$. In this case, consider the situation where every center $c$ bids its entire budget on item $j=c$. It is easy to confirm that this is a Nash equilibrium, and center one receives only one item (and thus the PoA is unbounded).
\end{proof}

\begin{theorem}\label{thm: first price lower bound p is 1}
For  all budget rules $g: \mathbb{R}_+^C \rightarrow \mathbb{R}_+^C$, $PoA_{1,\ell}(\text{Highest-Bidder-Wins}_g) \in \Omega(\sqrt{\ell})$.
\end{theorem}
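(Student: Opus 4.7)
The plan is to build on the construction of \Cref{thm: first price lower bound p less than 0} but with a valuation matrix of higher rank. I would take the matrix of \Cref{lemma:welfare example incoherence} with $m = kt$ and $t = \ell$, so that the rank is $t$, the incoherence is $1$, and $\ell = t$. I would place the $t$ specialized individuals (each supported on its own block of $k$ consecutive items), together with enough uniform individuals to equalize populations, in a single center that I call center $1$; the remaining centers hold only uniform individuals. Since the populations are symmetric, I may relabel so that center $1$ receives the smallest budget $B_1 = \min_c B_c$. The optimal welfare is $t\sqrt{k} = m/\sqrt{k}$ (assign each block to its specialized individual), so it suffices to exhibit a pure Nash equilibrium with welfare $O(\sqrt{m})$, which yields $\mathrm{PoA} \geq \Omega(m/\sqrt{k}\cdot 1/\sqrt{m}) = \Omega(\sqrt{\ell})$.

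I would then split on the budget profile $(B_1,\ldots,B_C) = g(s,\ldots,s)$. In the balanced sub-case where $B_{\max} \leq 2B_1$ (which subsumes the symmetric case in which all $B_c$ are equal), I would set $C = m$ and reuse the Case~2 profile of \Cref{thm: first price lower bound p less than 0}: each center $c$ bids its entire budget $B_c$ on item $c$ and zero elsewhere. There are no ties; a swap deviation is indifferent because each of center~$1$'s specialized individuals values its own block uniformly at $1/\sqrt{k}$ (and uniform individuals value every item at $1/\sqrt{m}$); and any split deviation $B_c/r$ for $r\geq 2$ satisfies $B_c/r \leq B_c/2 \leq B_1 \leq B_{c'}$, so it never outbids a competitor. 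Thus it is a Nash under any tie-breaking rule, and the welfare is $1/\sqrt{k} + (m-1)/\sqrt{m} = O(\sqrt{m})$.

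In the unbalanced sub-case $B_{\max} > 2B_1$, I would set $C = m+1$ and construct a partition equilibrium. Define $b^* = (B_\mathrm{total} - B_1)/m$; since $B_\mathrm{total} > (m+1)B_1$ in this sub-case, we get $b^* > B_1$. Choose integer $|G_c| \leq B_c/b^*$ with $\{G_c\}_{c\neq 1}$ partitioning $[m]$, let each center $c \neq 1$ bid $b^*$ on the items of $G_c$ and zero elsewhere, and let center $1$ bid zero everywhere. Center $1$ cannot outbid any item because $B_1 < b^*$, and any feasible drop-$d$-add-$s$ deviation by another center must satisfy $s \leq d$ by the budget constraint; since all items have value $1/\sqrt{m}$ to uniform individuals, the utility change is $(s-d)/\sqrt{m} \leq 0$ regardless of how ties are broken. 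Center $1$ wins nothing, so welfare is at most $m/\sqrt{m} = \sqrt{m}$.

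The main obstacle is the construction of the partition $\{G_c\}_{c\neq 1}$ with integer sizes that exactly cover $[m]$ while respecting every center's budget at the common bid $b^*$. For strongly unbalanced profiles (a single center with budget at least $m B_1$, or comparable spread) this is straightforward via $|G_c| = \lfloor B_c/b^*\rfloor$; the delicate range is when $B_{\max}$ is only moderately above $2B_1$ and $B_\mathrm{total}/B_1$ is not much bigger than $m$, which is where the floor bound $\sum_{c\neq 1}\lfloor B_c/b^*\rfloor \geq m$ can fail. I expect that the right fix is either to enlarge $C$ to create enough collective bidding capacity or to use a hybrid in which a few high-budget centers bid on more than one item each; carrying out this interpolation rigorously so that center $1$ still wins $o(\sqrt{mk})$ items, and thus welfare remains $O(\sqrt{m})$, is the key step I expect to require the most care.
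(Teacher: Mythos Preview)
Your constructive approach has a real gap in the balanced sub-case. The profile where each center $c$ bids its full budget $B_c$ on item $c$ is \emph{not} a Nash equilibrium when the budgets are unequal (even under $B_{\max}\le 2B_1$): you only ruled out equal-split deviations $B_c/r$, but a uniform center $c$ with $B_c > B_{c'}$ can bid $\epsilon$ on item $c$ (still winning it, since no one else bids there) and $B_c-\epsilon$ on item $c'$, winning both items whenever $\epsilon < B_c - B_{c'}$. In particular, any center with budget strictly above $B_1$ can steal item $1$ from center $1$ while keeping its own item, so your profile is an equilibrium only in the fully symmetric case $B_1=\cdots=B_C$. There is also a circularity in your case split: you choose $C$ (either $m$ or $m{+}1$) based on whether the budget profile is balanced, but the budget profile $g(s,\ldots,s)$ itself depends on $C$.

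The paper sidesteps both the equilibrium construction and the partition difficulty you flag by arguing about \emph{all} equilibria at once. With $C=m$ centers of equal population and center $1$ relabeled to have the smallest budget, it shows that center $1$ can never hold more than one item's worth of allocation at equilibrium: if it did, center $1$ would be bidding positively on at least two items (hence strictly below $B_1$ on one of them), while by pigeonhole some other center $c^*$ would be receiving strictly less than one item; then $c^*$ could bid its entire budget $B_{c^*}\ge B_1$ on that item, strictly outbidding center $1$ and jumping to a full item---a profitable deviation. This single observation caps equilibrium welfare at $\sqrt{m}+1/\sqrt{k}-1/\sqrt{m}$ uniformly over budget rules and tie-breaking, giving the $\Omega(\sqrt{\ell})$ bound without ever exhibiting an explicit equilibrium or partitioning the items.
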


\section{The Trading Post Mechanism}

In this section, we analyze the Price of Anarchy of the Trading Post mechanism with proportional budgets. In~\Cref{sec: lemmas} we prove some technical lemmas. In~\Cref{section: bounding poa} we prove upper bounds on the Price of Anarchy, and in~\Cref{section:lower bounds} we prove lower bounds.

Missing proofs can be found in~\Cref{appendix: missing}.





\subsection{Implications of Incoherence}\label{sec: lemmas}

Here, we prove a few technical lemmas that are useful when analyzing the Trading Post mechanism.

\begin{lemma} \label{lemma:incoherence valuation bound}
    Let $A$ be an $m \times n$ matrix with rank $r$ that is $\mu_0$-incoherent, and $\forall i \in [n],  \norm{A^{(i)}}_2 =1$. Then, $\max_{ j\in [m], i \in [n]} A_{j,i} \le \sqrt{\frac{\mu_0 r}{m}}$.
\end{lemma}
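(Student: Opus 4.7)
The plan is to unpack the definition of incoherence through the singular value decomposition and then apply Cauchy-Schwarz. Let $A = U \Sigma V^T$ be the SVD, where $U \in \RR{m}{r}$ and $V \in \RR{n}{r}$ have orthonormal columns, and $\Sigma \in \RR{r}{r}$ is diagonal. First I would write each entry as an inner product: $A_{j,i} = e_j^T A e_i = (U^T e_j)^T (\Sigma V^T e_i)$, which sets up a natural Cauchy-Schwarz bound.

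Next I would bound each of the two factors separately. For the first, since $P_U = U U^T$ is the orthogonal projection onto the column span of $U$, incoherence (Definition~\ref{definition:incoherence}) gives
\[
\|U^T e_j\|_2^2 \;=\; e_j^T U U^T e_j \;=\; \|P_U e_j\|_2^2 \;\le\; \frac{\mu_0 r}{m}.
\]
For the second, the orthonormality of $U$'s columns yields $\|\Sigma V^T e_i\|_2 = \|U \Sigma V^T e_i\|_2 = \|A e_i\|_2 = \|A^{(i)}\|_2 = 1$, where the last equality uses the normalization assumption on the columns of $A$.

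Combining the two bounds via Cauchy-Schwarz gives $|A_{j,i}| \le \sqrt{\mu_0 r / m}$ uniformly over $j$ and $i$, which is the claim. There is no real obstacle here: the lemma is essentially a direct consequence of the definition of incoherence together with the unit-norm normalization of the columns; the only point worth being careful about is keeping track of which factor picks up the incoherence bound and which factor picks up the unit norm.
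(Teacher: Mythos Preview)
Your proof is correct and follows essentially the same approach as the paper's: write each entry via the SVD as an inner product between $U^T e_j$ and $\Sigma V^T e_i$, apply Cauchy--Schwarz, bound the first factor by incoherence and the second by the unit-column assumption together with the orthonormality of $U$'s columns. The only cosmetic difference is that the paper spells out $\|\Sigma V^T e_i\|_2 = 1$ via Pythagoras on the orthogonal columns of $U$, whereas you use $\|Ux\|_2 = \|x\|_2$ directly.
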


\begin{proof}
    Let $U \Sigma V^T$ be the singular value decomposition of $A$. By definition, for all $i \in [n]$, $A^{(i)}$ is a linear combination of the columns of $U$. Let  $A^{(i)} = U \lambda_i$, where $\lambda_i = \Sigma V^T e_i$. Since $\norm{A^{(i)}}_2 = 1$ we have that $\norm{U\lambda_i}_2^2 = 1$. The vectors comprising $U$ are pairwise orthogonal, therefore, by the multi-dimensional generalization of Pythagoras's theorem, we have that $\sum_{\ell \in [r]} \norm{U^{(\ell)} \lambda_{i,\ell}}_2^2 = \norm{U\lambda_i}_2^2 = 1$, where $U^{(\ell)}$ is the $\ell^{th}$ column of $U$. Let $U_{(j)}$ be the $j^{th}$ row of $U$. Then $A_{j,i} = U_{(j)} \lambda_i \le \sqrt{\norm{U_{(j)}}_2^2 \norm{\lambda_i}_2^2} = \norm{U_{(j)}}_2$, where we used the Cauchy-Schwarz inequality.
    Furthermore, note that since $U$ is orthonormal, $\mathbf{P}_U = U U^T$, and thus,
\begin{equation*}
  \mu(U) = \frac{m}{r} \max_{i \in [m]} \norm{UU^T e_i}^2_2 = \frac{m}{r} \max_{i \in [m]} \left(UU^T e_i\right)^T UU^T e_i =  \frac{m}{r} \max_{i \in [m]} e_i^T U  U^T e_i = \frac{m}{r} \max_{i \in [m]} \norm{U^T e_i}^2_2.  
\end{equation*}
    Since $A$ is $\mu_0$-incoherent, we have $A_{j,i} \le \norm{U_{(j)}}_2 \le \max_{j \in [m]} \norm{U^T e_j}_2 \le \sqrt{\frac{\mu_0 r}{m}}$.
\end{proof}

\begin{lemma} \label{lemma: technical bound}
    Let $y \in [0,1]^{k}$, be such that, $\sum_{i \in [k]} y_i \le m$, then for all $p \neq 0$,
    \[\left( \sum_{i \in [k]} \left( y_{i}\right)^{p} \right)^{1/p} \le \frac{m}{k^{1-1/p}}.\]
\end{lemma}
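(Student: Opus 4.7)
The plan is to reduce the claim to the classical power mean inequality. Observe that the quantity $\left(\sum_{i \in [k]} y_i^p\right)^{1/p}$ equals $k^{1/p}$ times the $p$-th power mean $M_p(y) = \left(\frac{1}{k}\sum_{i \in [k]} y_i^p\right)^{1/p}$, and that the target bound $m/k^{1-1/p}$ equals $k^{1/p} \cdot (m/k)$. So it suffices to prove that $M_p(y) \le m/k$.

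The key steps, in order, are: \textbf{(1)} For $p \le 1$ with $p \ne 0$, invoke the standard power mean inequality $M_p(y) \le M_1(y)$, where $M_1(y) = \frac{1}{k}\sum_{i \in [k]} y_i$ is the arithmetic mean. This is an immediate consequence of Jensen's inequality applied to the concave function $x \mapsto x^p$ when $0 < p < 1$, and to the convex function $x \mapsto x^p$ when $p < 0$ (with the sign flip of $1/p$ reversing the direction back). \textbf{(2)} Use the hypothesis $\sum_{i \in [k]} y_i \le m$ to conclude $M_1(y) \le m/k$. \textbf{(3)} Multiply both sides by $k^{1/p}$ (which is positive) to recover
\[
\left(\sum_{i \in [k]} y_i^p\right)^{1/p} \;=\; k^{1/p}\, M_p(y) \;\le\; k^{1/p}\cdot \frac{m}{k} \;=\; \frac{m}{k^{1-1/p}},
\]
as desired. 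Note that the hypothesis $y_i \in [0,1]$ is not actually used for $p \le 1$; the argument goes through for any nonnegative $y$ with $\sum y_i \le m$ (with the usual convention $0^p = +\infty$ for $p<0$, in which case the sum is $+\infty$ and its $1/p$-power is $0$, and the inequality is trivial).

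The main obstacle is not in the $p \le 1$ regime, which is where this lemma is invoked throughout the paper, but rather in extending to $p > 1$: there the power mean inequality reverses ($M_p \ge M_1$), so this approach breaks. In that regime one would instead use that $x \mapsto x^p$ is convex, so the maximum of $\sum y_i^p$ on the constraint set is attained at an extreme point with as many coordinates saturated at $1$ as possible, yielding $\sum y_i^p \le \min(k, m)$ and hence the bound still holds provided additionally $k \le m$. Since the welfare objectives considered in this paper all satisfy $p \le 1$, the first case is what is needed; I would write up the proof focusing on $p \le 1$, $p \ne 0$, as a one-line application of power means followed by the arithmetic constraint.
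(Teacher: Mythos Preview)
Your proposal is correct for $p \le 1$, $p \ne 0$, which is the only regime the paper ever needs (and, as you correctly observe, the only regime in which the stated inequality actually holds---your implicit counterexample for $p>1$ with $k>m$ is real; e.g.\ $p=2$, $k=2$, $m=1$, $y=(1,0)$ violates the bound). The paper's own proof also only treats $p\le 1$, so you are not missing any case the paper covers.

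The approaches differ. The paper argues by a local smoothing/exchange argument: given any nonuniform $\tilde{y}$, it transfers a small amount from a large coordinate to a small one and uses the Mean Value Theorem on $x\mapsto x^p$ to show the objective does not decrease, concluding that the uniform vector $\hat{y}_i = m/k$ is optimal and then evaluating the bound there. Your route is the one-line reduction to the power mean inequality $M_p \le M_1$ for $p\le 1$ (equivalently, Jensen for the concave map $t\mapsto t^p$ when $0<p<1$, and the convex map with the sign flip when $p<0$), followed by $M_1 \le m/k$ from the budget constraint. Your argument is shorter, avoids the MVT bookkeeping, and makes transparent that the hypothesis $y_i\in[0,1]$ is unused in this regime---which is relevant, since in the paper's application of this lemma (inside \Cref{lemma: equal split bound}) the role of $y_i$ is played by $\sum_{j} y^*_{j,(c,i)}$, a quantity that need not lie in $[0,1]$. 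The paper's exchange argument, by contrast, is more self-contained and does not invoke a named inequality, but it is essentially re-deriving the same convexity fact by hand.
\qed
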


\begin{lemma} \label{lemma: equal split bound}
    Let $A$ be an $m \times n$, $r$-rank, and $\mu_0$-incoherent valuation matrix. Then, for all $p \neq 0$, $\frac{GMW_p(y^*)}{GMW_p(y^{equal})} \le \mu_0 \, r $, where $y^* \in \argmax_{y \in \Fcal} GMW_p(y)$, and $y^{equal}$ is the equal split allocation (i.e. $j \in [m], i \in [n]$, $y^{equal}_{j,i} = \frac{1}{n}$).
\end{lemma}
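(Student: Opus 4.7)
The plan is to separately upper bound $GMW_p(y^*)$ and lower bound $GMW_p(y^{equal})$, invoking \Cref{lemma:incoherence valuation bound} in both directions. The key observation is that the entrywise bound $A_{j,i} \le \sqrt{\mu_0 r / m}$ constrains the valuation matrix in two complementary ways: it caps how valuable any single item can be to any individual (giving the upper bound on $GMW_p(y^*)$), while simultaneously forcing each individual's value mass to be spread across many items (giving the lower bound on $GMW_p(y^{equal})$).

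For the lower bound, under the equal split each individual $i$ receives utility $u_i(y^{equal}) = \frac{1}{n}\sum_{j \in [m]} A_{j,i} = \frac{1}{n}\|A^{(i)}\|_1$. Since $\|A^{(i)}\|_2 = 1$, the standard norm inequality $\|A^{(i)}\|_2^2 \le \|A^{(i)}\|_\infty \cdot \|A^{(i)}\|_1$ combined with \Cref{lemma:incoherence valuation bound} gives $\|A^{(i)}\|_1 \ge 1/\|A^{(i)}\|_\infty \ge \sqrt{m/(\mu_0 r)}$. Thus $u_i(y^{equal}) \ge \frac{1}{n}\sqrt{m/(\mu_0 r)}$ for every $i$; since the generalized $p$-mean of a vector of positive numbers is always bounded below by its smallest entry, this implies $GMW_p(y^{equal}) \ge \frac{1}{n}\sqrt{m/(\mu_0 r)}$ for every $p \ne 0$.

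For the upper bound, I would use the monotonicity of $p$-means in $p$: for any $p \le 1$, $GMW_p(y^*) \le GMW_1(y^*)$, so it suffices to bound the utilitarian welfare. Swapping the order of summation and using feasibility $\sum_i y^*_{j,i} \le 1$, we have $GMW_1(y^*) = \frac{1}{n}\sum_j \sum_i A_{j,i} y^*_{j,i} \le \frac{1}{n}\sum_j \max_i A_{j,i}$, which by \Cref{lemma:incoherence valuation bound} is at most $\frac{m}{n}\sqrt{\mu_0 r/m} = \frac{1}{n}\sqrt{m\mu_0 r}$. Combining with the lower bound, the ratio is at most $\sqrt{m\mu_0 r}/\sqrt{m/(\mu_0 r)} = \mu_0 r$, as desired.

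The main obstacle was finding the right way to convert the $\ell_\infty$ bound on $A_{j,i}$ from \Cref{lemma:incoherence valuation bound} into a useful $\ell_1$ lower bound on $\|A^{(i)}\|_1$; the inequality $\|A^{(i)}\|_1 \ge \|A^{(i)}\|_2^2 / \|A^{(i)}\|_\infty$ (a special case of H\"older) is what does the job, and it explains structurally why the incoherence parameter simultaneously controls both directions of the comparison. I note that the argument as stated handles all $p \le 1$ with $p \ne 0$, which is the regime relevant to the rest of the paper; the case $p > 1$ either follows from an analogous but slightly different bookkeeping, or can be reduced to the $p=1$ bound using that the equal split allocation yields nearly-identical utilities across individuals.
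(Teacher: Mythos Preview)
Your proof is correct for the regime $p \le 1$, $p \neq 0$ (which is all the paper needs), and it takes a genuinely different route than the paper's own argument in both halves.

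For the \emph{lower bound} on $GMW_p(y^{equal})$, the paper sets up and solves the explicit optimization $\min \sum_j v_j$ subject to $\sum_j v_j^2 = 1$, $0 \le v_j \le \sqrt{\mu_0 r/m}$, arguing the minimizer places $m/(\mu_0 r)$ coordinates at the cap and the rest at zero. Your H\"older-type inequality $\|A^{(i)}\|_1 \ge \|A^{(i)}\|_2^2 / \|A^{(i)}\|_\infty$ reaches the identical bound $\sqrt{m/(\mu_0 r)}$ in one line, bypassing the optimization entirely. For the \emph{upper bound} on $GMW_p(y^*)$, the paper first replaces each $A_{j,i}$ by $\sqrt{\mu_0 r/m}$ and then invokes \Cref{lemma: technical bound} to control $\bigl(\sum_i (\sum_j y^*_{j,i})^p\bigr)^{1/p}$. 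You instead use the power-mean inequality $GMW_p(y^*) \le GMW_1(y^*)$ and bound the utilitarian welfare directly via feasibility, which sidesteps \Cref{lemma: technical bound} altogether. Both shortcuts are clean and make the structural role of the incoherence bound more transparent; the trade-off is that your upper-bound step is specific to $p \le 1$, whereas the paper's route via \Cref{lemma: technical bound} is at least nominally stated for all $p \neq 0$. Since the entire paper operates under $p \le 1$, this is not a loss.
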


\begin{proof}
    For $GMW_p(y^*)$ we have
\begin{align*}
    GMW_p(y^*) &= \left(\sum_{c \in [C]} \sum_{i \in I_c} \frac{1}{n}(u_{(c,i)}(y^*))^p\right)^{1/p}\\
    &= \left(\sum_{c \in [C]} \sum_{i \in I_c} \frac{1}{n}\left(\sum_{j \in [m]} A_{j,(c,i)} y^*_{j,(c,i)}\right)^p\right)^{1/p}\\
    &\le \left(\sum_{c \in [C]} \sum_{i \in I_c} \frac{1}{n}\left(\sum_{j \in [m]} \sqrt{\frac{\mu_0 r}{m}} y^*_{j,(c,i)}\right)^p\right)^{1/p} \tag{\Cref{lemma:incoherence valuation bound}} \\
    &=  \sqrt{\frac{\mu_0 r}{m}} \left( \frac{1}{n}\right)^{1/p}\left(\sum_{c \in [C]} \sum_{i \in I_c} \left(\sum_{j \in [m]} y^*_{j,(c,i)}\right)^{p} \right)^{1/p}\\
    &\le  \sqrt{\frac{\mu_0 r}{m}} \left( \frac{1}{n}\right)^{1/p}\frac{m}{n^{1-1/p}} \tag{\Cref{lemma: technical bound}}\\
    &= \sqrt{\frac{\mu_0 r}{m}} \frac{m}{n} \\
    &=\frac{\sqrt{m \mu_0 r}}{n}.
\end{align*}
It remains to lower bound $GMW_p(y^{equal})$, given the incoherence assumption on $A$. First, it is easy to confirm that $u_{(c,i)}(y^{equal}) = \frac{1}{n} \sum_{j \in [m]} A_{j,(c,i)}$. The solution to the following program is a lower bound on $n \, u_{(c,i)}(y^{equal})$:





\begin{equation*}
\begin{array}{ll}
\text{minimize} & \displaystyle\sum_{j \in [m]} v_j \quad \text{s.t.} \quad \sum_{j \in [m]} v_j^2 = 1, \quad v_j \le \sqrt{\frac{\mu_0 r}{m}} \; \forall j \in [m], \quad v_j \ge 0 \; \forall j \in [m].
\end{array}
\end{equation*}

This program is minimized when there are $\frac{m}{\mu_0 r}$ variables with values $\sqrt{\frac{\mu_0 r}{m}}$ and the remaining variables are set to $0$. Thus $u_{(c,i)}(y^{equal}) \ge \frac{1}{\sqrt{\mu_0 r}} \frac{\sqrt{m}}{n}$. Therefore, we have that:
\begin{align*}
    GMW_p(y^{equal}) &= \left(\sum_{c \in [C]} \sum_{i \in I_c} \frac{1}{n}(u_{(c,i)}(y^{equal}))^p\right)^{1/p} \ge \left(\sum_{c \in [C]} \sum_{i \in I_c} \frac{1}{n}\left(\frac{1}{\sqrt{\mu_0 r}} \frac{\sqrt{m}}{n}\right)^p\right)^{1/p} = \frac{1}{\sqrt{\mu_0 r}} \frac{\sqrt{m}}{n}.
\end{align*}
Combining the above we have
\[\frac{GMW_{p}(y^*)}{GMW_{p}(y^{equal})} \le \frac{\frac{\sqrt{m \mu_0 r}}{n}}{\frac{1}{\sqrt{\mu_0 r}} \frac{\sqrt{m}}{n}} = \mu_0 r. \qedhere\]
\end{proof}

\subsection{Price of Anarchy of the Trading Post Mechanism: Upper Bounds}
\label{section: bounding poa}

In this section, we prove our main positive results: upper bounds on the price of anarchy of the Trading Post mechanism. Recall that $\ell = \mu_0 \, r$ is the product of the rank and the incoherence of the underlying valuation matrix, $n$ is the number of individuals in the system, and $|I_c|$ is the number of individuals that center $c$ represents. We handle $p\neq 0$ (\Cref{theorem: trading post upper bound}) and $p=0$ (\Cref{theorem:nsw poa}) separately, where we show bounds of $\left( 1 + \max_{c \in [C]} \frac{|I_c|}{n} \right) \, \ell \leq 2 \ell$, and $2$, respectively.

\begin{theorem} \label{theorem: trading post upper bound}
    $PoA_{p, \ell}(\text{Trading Post}) \le \left( 1 + \max_{c \in [C]} \frac{|I_c|}{n} \right) \, \ell$, for all $p \leq 1$, $p \neq 0$.
\end{theorem}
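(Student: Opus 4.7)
The plan is to use \Cref{lemma: equal split bound} to relate the optimum $y^*$ to the equal-split allocation $y^{equal}$, and then, for any pure Nash equilibrium $\tilde y$, to bound $GMW_p(y^{equal})$ in terms of $GMW_p(\tilde y)$ via a carefully chosen per-center deviation. Chaining the two bounds produces the claimed $\ell\,(1+\max_c |I_c|/n)$ factor.

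Fix a pure Nash equilibrium $\tilde y$ with bids $\tilde b$, and set $\tilde B_j=\sum_{c'} \tilde b_{j,c'}$ and $\tilde B_{j,-c}=\tilde B_j-\tilde b_{j,c}$. The bidding restriction forces $\sum_j \tilde b_{j,c}\le |I_c|$ for each center, so $\sum_j \tilde B_j\le n$. For each center $c$ I would analyze the candidate deviation $b'_{j,c}=(|I_c|/n)\,\tilde B_j$. Its total cost is $(|I_c|/n)\sum_j \tilde B_j\le |I_c|$, so it is budget-feasible. Under this deviation the fraction of item $j$ that $c$ receives is $\frac{(|I_c|/n)\tilde B_j}{(|I_c|/n)\tilde B_j+\tilde B_{j,-c}}$, and since $\tilde B_j\ge \tilde B_{j,-c}$ a short calculation shows this is at least $|I_c|/(n+|I_c|)$ whenever $\tilde B_{j,-c}>0$; the edge case $\tilde B_{j,-c}=0$ is only easier because $c$ is then the sole bidder and receives the full item. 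Next I would have center $c$ split these fractions equally among its $|I_c|$ individuals, so that each individual $i\in I_c$ obtains utility at least $\tfrac{n}{n+|I_c|}\,u_{c,i}(y^{equal})$. The $1$-homogeneity and coordinate-wise monotonicity of the $p$-mean then give $GMW_{c,p}(y^{c,\mathrm{dev}})\ge \tfrac{n}{n+|I_c|}\,GMW_{c,p}(y^{equal})$, where $y^{c,\mathrm{dev}}$ denotes the allocation produced when $c$ plays the deviation (and optimally redistributes internally).

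Invoking the Nash condition $GMW_{c,p}(\tilde y)\ge GMW_{c,p}(y^{c,\mathrm{dev}})$ yields $GMW_{c,p}(y^{equal})\le (1+|I_c|/n)\,GMW_{c,p}(\tilde y)$ for every $c$. To aggregate, I would use the identity $n\cdot GMW_p(y)^p=\sum_c |I_c|\,GMW_{c,p}(y)^p$, tracking the sign of $p$: for $p\in(0,1]$ the per-center inequality is preserved when we raise to the $p$-th power; for $p<0$ it flips at that step but flips back when we take the $1/p$-th root at the end. In either case one obtains $GMW_p(y^{equal})\le (1+\max_c |I_c|/n)\,GMW_p(\tilde y)$, and chaining with \Cref{lemma: equal split bound} finishes the proof.

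The main obstacle, and the heart of the argument, is the choice of deviation $b'_{j,c}=(|I_c|/n)\tilde B_j$. It is engineered to simultaneously satisfy the budget constraint (using $\sum_j\tilde B_j\le n$) and to guarantee the item-by-item fraction $|I_c|/(n+|I_c|)$ independently of how the rest of the bids $\tilde b_{-c}$ are distributed across items; more naive candidates (for instance, bidding $|I_c|/m$ uniformly, or bidding proportionally to $\tilde B_{j,-c}$ alone) either blow up the budget when others concentrate their bids or yield no uniform lower bound on the received fraction, and consequently do not produce the clean $(1+\max_c |I_c|/n)$ factor. The only other care point is the sign bookkeeping for $p<0$ in the aggregation step, which is routine but easy to get wrong.
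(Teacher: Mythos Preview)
Your argument is correct and, in fact, a bit more streamlined than the paper's. Both proofs chain \Cref{lemma: equal split bound} with a per-center deviation showing that each center $c$ can secure, at any equilibrium, at least a $\frac{|I_c|}{n+|I_c|}$ fraction of every item; the aggregation step via $n\,GMW_p(y)^p=\sum_c |I_c|\,GMW_{c,p}(y)^p$ with the sign tracking for $p<0$ is identical in spirit. The difference lies in how that fraction is obtained. The paper bids proportionally to the \emph{opponents'} total $(\tilde p_{-c})_j$, which requires introducing an implicit scalar $\beta_c$ as the root of $f_c(x)=|I_c|$, verifying this root exists in the right interval, and then separately bounding $\beta_c\le 1+|I_c|/n$ via an accounting argument. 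Your deviation $b'_{j,c}=(|I_c|/n)\,\tilde B_j$ is explicit, and the two key facts (budget feasibility from $\sum_j\tilde B_j\le n$, and the item-wise fraction bound from $\tilde B_{j,-c}\le \tilde B_j$) fall out in one line each, bypassing the fixed-point construction entirely. What the paper's route buys is that its deviation hits an \emph{exact} scalar multiple of the equal split rather than a lower bound, which is conceptually tidy but not needed for the result.

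One small technical point you should make explicit: your deviation bids zero on any item $j$ with $\tilde B_j=0$, so the received fraction there is $0/0$. You handle $\tilde B_{j,-c}=0$ with $\tilde B_j>0$, but not $\tilde B_j=0$. The standard fix is to observe that at a pure Nash equilibrium every item with positive value to some individual must have $\tilde B_j>0$ (otherwise some center has a profitable $\epsilon$-deviation), and items valued by nobody can be discarded. The paper's proof has the same implicit assumption, so this is not a defect relative to the original, but it is worth one sentence.
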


\begin{proof}
Let $\tilde{y}$ be the allocation at an arbitrary equilibrium of the Trading Post mechanism, and $y^* \in \argmax_{y \in \Fcal} GMW_p(y)$ be an allocation that maximizes $GMW_p(.)$ over all feasible allocations. Our main goal will be to prove that
\begin{equation}\label{eq: main bound for trading post}
    \frac{GMW_p(y^{equal})}{GMW_p(\tilde{y})} \le 1 + \max_{c \in [C]} \frac{|I_c|}{n},
\end{equation}
where $y^{equal}$ is the equal split allocation (i.e. $y^{equal}_{j,(c,i)} = \frac{1}{n}$). 
Given~\Cref{eq: main bound for trading post}, and using the fact that $\frac{GMW_p(y^*)}{GMW_p(y^{equal})} \le \mu_0 \, r$ (\Cref{lemma: equal split bound}) we can prove the theorem:
\[
PoA_{p,\ell}(TP) =\frac{GMW_p(y^*)}{GMW_p(\tilde{y})} = \frac{GMW_p(y^*)}{GMW_p(y^{equal})}\frac{GMW_p(y^{equal})}{GMW_p(\tilde{y})} \le (1 + \max_{c \in [C]} \frac{|I_c|}{n}) \, \mu_0 \, r = (1 + \max_{c \in [C]} \frac{|I_c|}{n}) \, \ell.\]

To establish~\Cref{eq: main bound for trading post}, we start by showing that every center $c$ has a bidding strategy that, assuming that all other centers bid their equilibrium bids, guarantees an allocation that is a scalar multiple of the equal split allocation. Since this deviation cannot increase the utility of the center (by the equilibrium property), this implies that the equilibrium utility is at least a scalar multiple of the utility in the equal split allocation. Combining with a lower bound on this scalar we can establish~\Cref{eq: main bound for trading post}, concluding the proof.

Let $\tilde{b}$ be the equilibrium bids (which yield allocation $\tilde{y}$ in the Trading Post mechanism). Let $\tilde{p}_j = \sum_{c \in [C]} \sum_{i \in I_c} \tilde{b}_{j,(c,i)}$ be the sum of bids on item $j \in [m]$, and $(\tilde{p}_{-c})_j = \tilde{p}_j - \sum_{i \in I_c} \tilde{b}_{j,(c,i)}$ be the sum of bids on item $j \in [m]$ excluding the bids of center $c$.



Consider the following function: $f_c(x) = \sum_{j \in [m]} \frac{ |I_c| (\tilde{p}_{-c})_j }{n x - |I_c|}.$
In the interval $\left( |I_C|/n , \infty\right)$, $f_c(x)$ is continuous, decreasing and $\lim_{x \rightarrow |I_c|/n^+} f_c(x) = \infty, \lim_{x \rightarrow \infty } f_c(x) = 0$.
Thus, there exists a point $\beta_c  > |I_c|/n$ such that $f_c(\beta_c) = |I_c|$. 

Next, consider the following deviation for center $c$. The bid of individual $i \in I_c$ for item $j \in [m]$ is 
$b'_{j,(c,i)} = \frac{(\tilde{p}_{-c})_j}{n \beta_c - |I_c| }$. 
We prove that this is a valid bidding strategy, i.e., (i) all bids are non-negative, and (ii) the sum of bids is less or equal to the budget $|I_c|$. Condition (i) is implied by the fact that $\beta_c > \max_{j \in [m]}\left\{\sum_{i \in I_c} y^{equal}_{j,(c,i)} \right\}$. Regarding condition (ii) we have that:



\begin{align*}
    \sum_{j \in [m]} \sum_{i \in I_c} b'_{j,(c,i)} &=  \sum_{j \in [m]} \sum_{i \in I_c} \frac{(\tilde{p}_{-c})_j}{n \beta_c - |I_c| } = \sum_{j \in [m]}  \frac{|I_c| (\tilde{p}_{-c})_j}{n \beta_c - |I_c| } = f_c(\beta_c) = |I_c|.
\end{align*}


Now, let $y'$ be the allocation of the Trading Post mechanism when center $c$ bids according to $b'_{j,(c,i)}$ and all other centers bid according to $\tilde{b}$. By observing that 
$u_c(y')$ is at least the utility of center $c$ for the allocation that gives each agent $i \in I_c$ exactly a $\frac{b'_{j,(c,i)}}{(\tilde{p}_{-c})_j + \sum_{i' \in I_c} b'_{j,(c,i')}}$ amount of item $j$, we can lower bound  $u_c(y')$ as follows:






\begin{align*}
u_c(y') &\geq \left( \sum_{i \in I_c}\frac{1}{|I_c|} \left( \sum_{j \in [m]} u_{j,(c,i)} \frac{b'_{j,(c,i)}}{(\tilde{p}_{-c})_j + \sum_{i' \in I_c} b'_{j,(c,i')}} \right)^{p} \right)^{1/p} \\
&= \left( \sum_{i \in I_c}\frac{1}{|I_c|} \left( \sum_{j \in [m]} u_{j,(c,i)} \frac{ \frac{(\tilde{p}_{-c})_j}{n \beta_c - |I_c| } }{ (\tilde{p}_{-c})_j + |I_c|  \frac{(\tilde{p}_{-c})_j}{n \beta_c - |I_c| } } \right)^{p} \right)^{1/p} \\
&= \left( \sum_{i \in I_c}\frac{1}{|I_c|} \left( \sum_{j \in [m]} u_{j,(c,i)} \frac{1}{n \beta_c} \right)^{p} \right)^{1/p} \\
&= \left( \sum_{i \in I_c} \frac{1}{|I_c|} \left(\frac{u_{(c,i)}(y^{equal})}{\beta_c}  \right)^{p} \right)^{1/p} \\
&= \frac{1}{\beta_c}\left( \sum_{i \in I_c} \frac{1}{|I_c|} \left( u_{(c,i)}(y^{equal}) \right)^{p} \right)^{1/p} \\
&= \frac{1}{\beta_c} u_i(y^{equal}).
\end{align*}

Finally, we can relate $u_{(c,i)}(\tilde{y})$ with $u_{(c,i)}(y^{equal})$. For $p>0$ we have
\begin{align*}
    \beta_c^{p} \sum_{i \in I_c} (u_{(c,i)}(\tilde{y}))^p   &=  \beta_c^{p}  |I_c| \sum_{i \in I_c} \frac{1}{|I_c|} (u_{(c,i)}(\tilde{y}))^p\\
    &= |I_c|\left(\beta_c \left( \sum_{i \in I_c} \frac{1}{|I_c|} (u_{(c,i)}(\tilde{y}))^p\right)^{1/p}\right)^p \\
    &= |I_c|\left(\beta_c u_c(\tilde{y})\right)^p\\
    &\ge |I_c| \left(u_c(y^{equal})\right)^p  \tag{$u_c(\tilde{y}) \ge u_c(y') = \frac{1}{\beta_c} u_c(y^{equal})$}\\
    &= |I_c| \left(\left( \sum_{i \in I_c} \frac{1}{|I_c|} (u_{c,i}(y^{equal}))^p\right)^{1/p}\right)^p\\
    &= \sum_{i \in I_c} (u_{c,i}(y^{equal}))^p.
\end{align*}

For $p < 0$ a similar argument would imply that $\beta_c^{p} \sum_{i \in I_c} (u_{(c,i)}(\tilde{y}))^p  \le  \sum_{i \in I_c} (u_{(c,i)}(y^{equal}))^p$.

\paragraph{Bounding $\beta_c$.}
We proceed to bound $\beta_c$. From the definition of the Trading Post mechanism and what we have proven thus far, we have that
$\frac{1}{n \beta_c}\sum_{j \in [m]} \sum_{i \in I_c}\left((\tilde{p}_{-c})_j + \sum_{i' \in I_c} b'_{j,(c,i')}\right) = \sum_{j \in [m]} \sum_{i \in I_c} y'_{j,(c,i)}\left((\tilde{p}_{-c})_j + \sum_{i' \in I_c} b'_{j,(c,i')}\right) = \sum_{j \in [m]} \sum_{i \in I_c} b'_{j,(c,i)} = |I_c|$,
where in the second equality we have used the fact that $y'_{j,(c,i)} = \frac{b'_{j,(c,i)}}{(\tilde{p}_{-c})_j + \sum_{i' \in I_c} b'_{j,(c,i')}}$ from the definition of the Trading Post mechanism. By rearranging we get that:
\begin{align*}
    \beta_c &= \frac{1}{n |I_c|}\sum_{j \in [m]} \sum_{i \in I_c}  \left((\tilde{p}_{-c})_j + \sum_{i' \in I_c} b'_{j,(c,i')}\right) \\
     &= \frac{1}{n |I_c|} \sum_{j \in [m]} \sum_{i \in I_c}  (\tilde{p}_{-c})_j + \frac{1}{n |I_c|} \sum_{j \in [m]} \sum_{i \in I_c}  \sum_{i' \in I_c} b'_{j,(c,i')} \\
    &= \frac{1}{n |I_c|} \sum_{j \in [m]} \sum_{i \in I_c} (\tilde{p}_{-c})_j + \frac{|I_c|}{n |I_c|} \sum_{j \in [m]} \sum_{i' \in I_c} b'_{j,(c,i')}  \\
    &= \frac{1}{n |I_c|} \sum_{j \in [m]} \sum_{i \in I_c}  (\tilde{p}_{-c})_j + \frac{|I_c|}{n} \\
    &= \frac{|I_c|}{n |I_c|} \sum_{j \in [m]} (\tilde{p}_{-c})_j + \frac{|I_c|}{n}  \\
    &\le \frac{|I_c|}{n |I_c|} \cdot n + \frac{|I_c|}{n}\\
    &\leq 1 + \frac{|I_c|}{n}. 
\end{align*}




\paragraph{Putting everything together.}
Let $\beta = \max_{c \in [C]} \beta_c$. For both $p > 0$ and $p < 0$, we can establish~\Cref{eq: main bound for trading post} as follows:
\begin{align*}
    \frac{GMW_p(y^{equal})}{GMW_p(\tilde{y})} &= \left(\frac{\sum_{c \in [C]} \sum_{i \in I_c} (u_{c,i}(y^{equal}))^p}{\sum_{c \in [C]} \sum_{i \in I_c} (u_{c,i}(\tilde{y}))^p}\right)^{1/p} \\
    &\le \left(\frac{\sum_{c \in [C]} \beta_c^p \sum_{i \in I_c} (u_{c,i}(\tilde{y}))^p}{\sum_{c \in [C]} \sum_{i \in I_c} (u_{c,i}(\tilde{y}))^p}\right)^{1/p} \\
    &\le  \beta \\
    &\le 1 + \max_{c \in [C]} \frac{|I_c|}{n}.
\end{align*}




This concludes the proof of~\Cref{theorem: trading post upper bound}.
\end{proof}

\begin{theorem}\label{theorem:nsw poa}
    $PoA_{p, \ell}(\text{Trading Post}) \le 2$, for $p = 0$.
\end{theorem}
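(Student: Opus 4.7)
The plan is to mirror the structure of the proof for $p \neq 0$, but replace the "route through $y^{equal}$" (which costs a factor of $\ell$) with a deviation that compares center $c$'s equilibrium utility directly to its optimal utility $u_c(y^*)$, where $y^*$ is the $GMW_0$-optimal feasible allocation. Fix an arbitrary pure Nash equilibrium $\tilde{b}$ with allocation $\tilde{y}$ and, as in \Cref{theorem: trading post upper bound}, let $(\tilde{p}_{-c})_j = \sum_{c' \neq c} \sum_{i' \in I_{c'}} \tilde{b}_{j,(c',i')}$ and $y^*_{j,c} = \sum_{i \in I_c}y^*_{j,(c,i)}$. Consider the deviation $b'_{j,(c,i)} = \alpha_c (\tilde{p}_{-c})_j \, y^*_{j,(c,i)}$, with the scalar $\alpha_c = |I_c|/\sum_j (\tilde{p}_{-c})_j y^*_{j,c}$ chosen so that $\sum_{j,i \in I_c} b'_{j,(c,i)} = |I_c|$ (the degenerate case $\sum_j (\tilde{p}_{-c})_j y^*_{j,c} = 0$ is handled separately by a small-$\varepsilon$ bid that lets center $c$ win its optimal slice, which only improves the bound).

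Plugging this into the Trading Post allocation rule gives
\[
y'_{j,(c,i)} \;=\; \frac{\alpha_c (\tilde{p}_{-c})_j y^*_{j,(c,i)}}{(\tilde{p}_{-c})_j + \alpha_c (\tilde{p}_{-c})_j y^*_{j,c}} \;=\; \frac{\alpha_c \, y^*_{j,(c,i)}}{1 + \alpha_c y^*_{j,c}} \;\ge\; \frac{\alpha_c}{1+\alpha_c}\,y^*_{j,(c,i)},
\]
where the inequality uses $y^*_{j,c} \le 1$. Summing against $A_{j,(c,i)}$ yields $u_{(c,i)}(y') \ge \frac{\alpha_c}{1+\alpha_c}\, u_{(c,i)}(y^*)$ for each $i \in I_c$. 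Since the geometric mean is homogeneous of degree one, $u_c(y') \ge \frac{\alpha_c}{1+\alpha_c}\, u_c(y^*)$, and the equilibrium property (combined with the fact that center $c$ may always choose the optimal internal division, which only improves on the internal split implicit in $b'$) gives $u_c(\tilde y) \ge u_c(y') \ge \frac{\alpha_c}{1+\alpha_c}\,u_c(y^*)$, i.e.\ $\tfrac{u_c(y^*)}{u_c(\tilde{y})} \le 1 + \tfrac{1}{\alpha_c}$.

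The finishing step aggregates these per-center bounds by writing $GMW_0$ as a weighted geometric mean. Taking logs,
\[
\log\frac{GMW_0(y^*)}{GMW_0(\tilde{y})} \;=\; \sum_{c \in [C]} \frac{|I_c|}{n}\,\log\frac{u_c(y^*)}{u_c(\tilde{y})} \;\le\; \sum_{c \in [C]} \frac{|I_c|}{n}\,\log\!\left(1 + \frac{1}{\alpha_c}\right).
\]
Concavity of $\log(1+x)$ (Jensen's inequality with weights $|I_c|/n$ summing to $1$) upper bounds the right-hand side by $\log\!\left(1 + \sum_c \tfrac{|I_c|}{n\alpha_c}\right)$. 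It remains to bound the inner sum: using $(\tilde{p}_{-c})_j \le \tilde{p}_j$ and feasibility of $y^*$ (so $\sum_c y^*_{j,c}\le 1$), together with the budget constraint $\sum_j \tilde{p}_j \le n$,
\[
\sum_{c \in [C]} \frac{|I_c|}{n\alpha_c} \;=\; \frac{1}{n}\sum_{c \in [C]} \sum_{j \in [m]} (\tilde{p}_{-c})_j\,y^*_{j,c} \;\le\; \frac{1}{n}\sum_{j \in [m]} \tilde{p}_j \sum_{c \in [C]} y^*_{j,c} \;\le\; \frac{1}{n}\sum_{j \in [m]} \tilde{p}_j \;\le\; 1.
\]
Therefore $\log\frac{GMW_0(y^*)}{GMW_0(\tilde{y})} \le \log 2$, i.e.\ $PoA_{0,\ell}(\text{Trading Post}) \le 2$.

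The main obstacle is picking the deviation: it must be cheap enough to respect the budget, yield an allocation close to $y^*$ on a per-individual basis (so the homogeneity of the geometric mean carries the bound through), \emph{and} have a "cost" $1/\alpha_c$ whose weighted sum telescopes to at most $1$. The multiplicative form $b'_{j,(c,i)} \propto (\tilde{p}_{-c})_j \, y^*_{j,(c,i)}$ is what makes these three requirements compatible, and the final constant $2$ then emerges cleanly from Jensen's inequality on $\log(1+x)$.
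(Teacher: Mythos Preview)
Your proof is correct and follows the same overall strategy as the paper's: construct a per-center deviation that targets the optimal allocation $y^*$, obtain a per-center ratio bound $u_c(y^*)/u_c(\tilde y)\le (\text{cost}_c)$, and then aggregate the costs via weighted AM--GM to get $2$. The only substantive difference is the choice of deviation. The paper bids $b'_{j,(c,i)} = \frac{y^*_{j,(c,i)}(\tilde p_{-c})_j}{\beta_c - \sum_{i'}y^*_{j,(c,i')}}$, where $\beta_c$ is defined implicitly via the intermediate value theorem so that the budget is exactly exhausted; this buys the \emph{exact} identity $y'_{j,(c,i)} = y^*_{j,(c,i)}/\beta_c$ and hence $u_c(y^*)/u_c(\tilde y)\le \beta_c$, after which one shows $\sum_c \beta_c |I_c| \le 2n$ and applies $\prod_c \beta_c^{|I_c|/n}\le \frac{1}{n}\sum_c \beta_c|I_c|$. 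Your deviation $b'_{j,(c,i)} = \alpha_c(\tilde p_{-c})_j y^*_{j,(c,i)}$ keeps $\alpha_c$ explicit and avoids the IVT step, at the price of getting only the inequality $y'_{j,(c,i)}\ge \frac{\alpha_c}{1+\alpha_c}y^*_{j,(c,i)}$; the resulting cost $1+1/\alpha_c$ then telescopes via the identical estimate $\sum_c\sum_j(\tilde p_{-c})_j y^*_{j,c}\le \sum_j \tilde p_j\le n$. Both routes land on the same constant $2$, so the difference is bookkeeping; your version is slightly more elementary.
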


The proof of \Cref{theorem:nsw poa} is similar to the proof of~\Cref{theorem: trading post upper bound} and is deferred to~\Cref{appendix: missing}.

\subsection{Price of Anarchy of the Trading Post Mechanism: Lower Bounds}\label{section:lower bounds}

For the most well-studied Generalized $p$-Mean objectives, utilitarian social welfare ($p=1$), egalitarian social welfare ($p \rightarrow -\infty$), and Nash social welfare ($p=0$), we have shown that the Price of Anarchy of the Trading Post mechanism with proportional budgets is at most $\left( 1 + \max_{c \in [C]} \frac{|I_c|}{n} \right) \ell$, $\left( 1 + \max_{c \in [C]} \frac{|I_c|}{n} \right) \ell$, and $2$, respectively. 
Given the much better bound for $p=0$, it might seem plausible that a tighter analysis is possible for $p=1$ and/or $p \rightarrow -\infty$. In this section, we show that this is not the case. Specifically, for $p=1$ and for $p \rightarrow -\infty$, we prove that the Price of Anarchy of the Trading Post mechanism with proportional budgets is $\Omega(\sqrt{\ell})$. Although these lower bounds are not tight, they demonstrate that a dependency on $\ell = \mu_0 \, r$ is inevitable.


\begin{theorem} \label{theorem: welfare poa lower bound}
     $PoA_{p, \ell}(\text{Trading Post}) \in  \Omega(\sqrt{\ell})$ for $p =1$.
\end{theorem}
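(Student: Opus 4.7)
The plan is to instantiate Lemma~\ref{lemma:welfare example incoherence} with $k=1$ to obtain a ``specialist-generalist'' valuation matrix realizing $\ell = m$, and then exhibit a pure Nash equilibrium of the Trading Post mechanism on this instance whose utilitarian welfare is a $\Theta(\sqrt{m})$ factor below optimal. Concretely, I would take $m$ items, $m$ ``specialist'' individuals (specialist $i$ values item $i$ at $1$ and other items at $0$), and $K = m^2$ ``generalist'' individuals, each valuing every item at $1/\sqrt{m}$. Each specialist forms its own center with budget $1$, and all $K$ generalists share a single ``generalist center'' with budget $K$. Lemma~\ref{lemma:welfare example incoherence} certifies rank $m$ and $1$-incoherence, so $\ell = m$, and the column norms are unit by inspection.

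The candidate equilibrium is: specialist $i$'s center puts its entire budget on item $i$, while the generalist center bids $K/m = m$ uniformly on every item. For a specialist center, its unique individual values only one item, so concentrating its budget on that item strictly maximizes the fractional share received; no deviation is profitable. For the generalist center, since all its members have identical uniform values $1/\sqrt{m}$, its $p=1$ objective is proportional to the total fractional mass of items it obtains, namely $\sum_{j=1}^{m} b_j/(b_j+1)$ against the specialists' unit bids, subject to $\sum_j b_j \le K$. Each summand is concave in $b_j$ (second derivative $-2/(b_j+1)^3$) and the problem is symmetric across items, so the unique constrained maximizer is uniform $b_j = K/m$. I would also check the bidding restriction: the generalist is the strict highest bidder on every item (since $m>1$), and the sum of its bids equals its budget with equality, so the restriction is satisfied; specialists are not highest on any item and are trivially fine.

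For the Price of Anarchy calculation, each specialist receives fraction $1/(1+K/m) = m/(m+K)$ of its item, contributing utility $m/(m+K)$. The generalist center receives total item mass $m \cdot (K/m)/(1+K/m) = mK/(m+K)$ which, spread evenly among $K$ individuals with per-item value $1/\sqrt{m}$, yields utility $\sqrt{m}/(m+K)$ per generalist. The total equilibrium utilitarian welfare is therefore $(m^2 + K\sqrt{m})/(m+K)$, while the optimal (assign each specialist its own item, generalists get nothing) is $m$. Substituting $K = m^2$ and simplifying yields a PoA ratio of $m(m+K)/(m^2+K\sqrt{m}) = (1+m)/(1+\sqrt{m}) = \Theta(\sqrt{m}) = \Theta(\sqrt{\ell})$, as claimed.

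The main obstacle is verifying that uniform bidding is a best response for the generalist center, since its strategy space is large ($m$ items and a sizable budget $K$). The resolution is the concavity-plus-symmetry argument above, which collapses the $m$-dimensional optimization to a single scalar condition. A minor subtlety is choosing $K$ large enough that the generalist ``crowds out'' the specialists by a multiplicative $\sqrt{m}$ factor while not violating the bidding restriction; any $K = \omega(m^{3/2})$ works, and $K = m^2$ is a clean choice for which the generalist's total bid exactly equals its budget at the equilibrium.
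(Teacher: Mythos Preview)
Your proposal is correct and follows essentially the same approach as the paper: both instantiate Lemma~\ref{lemma:welfare example incoherence} to build a specialist/generalist instance and let a single large generalist center crowd out the specialists in a Trading Post equilibrium. The only differences are cosmetic: you specialize to $k=1$ (which still realizes every integer $\ell=m$) and give an explicit finite equilibrium with $K=m^2$ together with a concavity-plus-symmetry best-response check, whereas the paper keeps $k$ general and takes $n\to\infty$ so that the generalist center's share of every item tends to one; your version is slightly more explicit about equilibrium existence, but the underlying construction and welfare comparison are identical.
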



\begin{proof}
Fix an integer $\ell$. We construct an instance with $m$ items, where $m$ can be any multiple of $\ell$, and $\ell + 1$ centers, with a valuation matrix $A$ that has rank $r = \ell$ and is $\mu_0 = 1$-incoherent  (and therefore, indeed, $\mu_0 \, r = \ell$) and show that $\frac{GMW_p(y^*)}{GMW_p(\tilde{y})} \rightarrow \sqrt{\ell}$, where $\tilde{y}$ is the allocation of the Trading Post mechanism in some equilibrium and $y^*$ is an optimal solution (with respect to $GMW_p(.)$).

Pick $m$ such that $k = \frac{m}{\ell}$ is an integer (i.e., $k$ divides $m$). Our instance has $\ell + 1 = m/k+1$ centers. Center $c \in [m/k]$ represents one individual ($|I_c| = 1$), who values only items $(c-1)k+1, \cdots , ck$, and she values them equally (i.e., $v_{(c,1)} = \left[0, \cdots ,\frac{1}{\sqrt{k}}, \frac{1}{\sqrt{k}}, \cdots, 0 \right]^T$) . Center $m/k+1$ has $|I_{m/k+1}| = n- m/k$ individuals; each individual values all items equally (i.e., $\forall i \in I_{m/k+1}, v_{(m/k+1,i)} = \left[\frac{1}{\sqrt{m}}, \cdots,  \frac{1}{\sqrt{m}}\right]^T$). 

The valuation matrix of our instance is exactly the matrix in the statement of~\Cref{lemma:welfare example incoherence}. Therefore, the rank is $m/k$, and the incoherence is $\mu_0 = 1$ (and therefore, indeed, $\mu_0 \, r = \frac{m}{k}$, or $k = \frac{m}{\mu_0 \, r} = \frac{m}{\ell}$). Since we are maximizing social welfare, i.e., $GMW_1(.)$, the optimal allocation would be to allocate each item to the individual that values it the most. Therefore, $GMW_1(y^*) = \frac{m}{\sqrt{k}}$.
On the other hand, we can pick $n$ large enough so that the budget $B_{m/k + 1} = |I_{m/k+1}|$ is so large that all the items are allocated to center $m/k+1$. In this equilibrium, center $m/k + 1$ allocates each item $j$ to an individual who values it at $1/\sqrt{m}$. Thus, $GMW_1(\tilde{y}) = \sqrt{m}$. Overall:
\[\frac{GMW_p(y^*)}{GMW_p(\tilde{y})} \rightarrow \sqrt{\frac{m}{k}} = \sqrt{\mu_0 r} =\sqrt{\ell}. \qedhere \]
\end{proof}




The following theorem shows a lower bound for $p \rightarrow -\infty$; the proof is similar to the proof of \Cref{theorem: welfare poa lower bound} and is deferred to~\Cref{appendix: missing}.

\begin{theorem} \label{theorem: maximin lower bound}
     $PoA_{p, \ell}(\text{Trading Post}) \in  \Omega(\sqrt{\ell})$ as $p \rightarrow -\infty$.
\end{theorem}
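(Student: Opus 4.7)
The plan is to adapt the construction used in the proof of \Cref{theorem: welfare poa lower bound}, invoking \Cref{lemma:maxmin example incoherence} (instead of \Cref{lemma:welfare example incoherence}) to control rank and incoherence. Fix an integer $\ell$ that is a perfect square, pick $m$ divisible by $\ell$, and set $k := m/\ell$, so \Cref{lemma:maxmin example incoherence} yields an instance of rank $r = 2$ and incoherence $\mu_0 = m/(2k)$, i.e., $\mu_0\, r = \ell$. The instance has two centers: center $1$ represents a single individual valuing items $1, \ldots, k$ at $1/\sqrt{k}$ and nothing else, while center $2$ represents $d := \sqrt{\ell}$ individuals, each valuing every item at $1/\sqrt{m}$.

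First I would compute the optimum. Since $\lim_{p \to -\infty} GMW_p(y) = \min_{(c,i)} u_{(c,i)}(y)$, the optimal allocation balances individual $1$'s utility $s/\sqrt{k}$, where $s \in [0, k]$ is the total mass of items from $[k]$ given to individual $1$, against the utility $(m-s)/(d\sqrt{m})$ that each center-$2$ individual receives when the remaining mass is split equally among them. Setting these equal gives $s = m\sqrt{k}/(d\sqrt{m} + \sqrt{k})$; a direct check shows $s \le k$ under $d = \sqrt{\ell}$, so $GMW_{-\infty}(y^*) = m/(d\sqrt{m} + \sqrt{k}) = \Theta(\sqrt{k})$ for $m \gg k$.

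Next I would exhibit an equilibrium $\tilde{y}$ in which center $1$ bids $1/k$ on each of items $1, \ldots, k$ and center $2$ bids $d/k$ on each of items $1, \ldots, k$, with an infinitesimally small bid on each remaining item so that it is allocated. By concavity and symmetry of $b \mapsto b/(b+c)$, the uniform bid is a best response for center $1$; for center $2$, the items in $\{k+1, \ldots, m\}$ are uncontested and contribute share $1$ for any positive bid, so Lagrangian analysis pushes (in the limit) the entire budget $d$ onto the contested items $\{1, \ldots, k\}$. Direct computation then gives individual $1$ utility $\sqrt{k}/(1+d) = \Theta(k/\sqrt{m})$ and each center-$2$ individual utility $[k/(1+d) + (m-k)/d]/\sqrt{m} = \Theta(\sqrt{k})$, so $GMW_{-\infty}(\tilde{y}) = \Theta(k/\sqrt{m})$. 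The ratio is $\Theta(\sqrt{m/k}) = \Theta(\sqrt{\ell})$, proving the claim.

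The main obstacle is the standard Trading Post discontinuity: if both centers bid $0$ on an item, that item is unallocated, so center $2$'s best response ``at the boundary'' $b^{(2)}_j = 0$ on the uncontested items is not attained in $\{b \ge 0\}$. As in the proof of \Cref{theorem: welfare poa lower bound}, I would resolve this by viewing $\tilde{y}$ as the limit of $\varepsilon$-perturbed strategies in which center $2$ bids $\varepsilon > 0$ per uncontested item, and observe that the centers' utilities, and hence the PoA ratio, are continuous at $\varepsilon = 0^+$, which suffices for the $\Omega(\sqrt{\ell})$ lower bound.
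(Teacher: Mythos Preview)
Your proof takes essentially the same approach as the paper's: the same two-center construction from \Cref{lemma:maxmin example incoherence} and the same ``all individuals equally share items $1,\dots,k$; center~2 gets the rest'' equilibrium. The paper differs only in that it allows any $|N| \ge \frac{m-k}{\sqrt{mk}}+1$ (rather than fixing $|N|=\sqrt{\ell}+1$) and lower-bounds the optimum via the coarser allocation that gives all of $[k]$ to center~1, so that $GMW_{-\infty}(y^*)\ge \frac{m-k}{|N|\sqrt{m}}$ and the equilibrium value $\frac{\sqrt{k}}{|N|}$ have $|N|$ cancel cleanly in the ratio; the paper also simply asserts the equilibrium without discussing the zero-bid discontinuity you flag.
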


\section*{Acknowledgements}
The authors would like to thank Christos Boutsikas for his valuable insights into Linear Algebra.

Marios Mertzanidis and Alexandros Psomas are supported in part by an NSF CAREER award CCF-2144208, a Google AI for Social Good award, and research awards from Google and Supra.
M.\ Mertzanidis is also supported in part by NSF CCF 2209509 and 1814041.

\bibliographystyle{alpha}
\bibliography{refs}

\appendix

\section{Missing Proofs}\label{appendix: missing}
\begin{proof}[Proof of~\Cref{lemma:welfare example incoherence}]
    It is straightforward to check that $A$ has rank $m/k$. Notice that by the definition of $\mu_0$-incoherence (\Cref{definition:incoherence}), we simply need to show that $\max_{i \in [m]} \norm{U^T e_i}^2_2 = 1/k$. To do so we calculate the singular value decomposition of $A$. We start by calculating the singular values of $A$. Since our matrix has rank $m/k$, we know that we have $m/k$ non-zero singular values which are equal to the square roots of the non-zero eigenvalues of $A^TA$. To find those eigenvalues we need to solve the equation $det(A^TA - I\lambda) = 0$. It is easy to check that apart from $0$, $1$ is also an eigenvalue. For ease of notation, let $w = n-m/k$.

\begin{align*}
    det(A^TA - I\lambda) &= \begin{vmatrix}
1-\lambda & 0 & 0 & \cdots & \sqrt{\frac{k}{m}} & \sqrt{\frac{k}{m}} & \cdots & \sqrt{\frac{k}{m}} \\
0 & 1-\lambda & 0 & \cdots & \sqrt{\frac{k}{m}} & \sqrt{\frac{k}{m}}& \cdots & \sqrt{\frac{k}{m}} \\
0 & 0 & 1-\lambda & \cdots & \sqrt{\frac{k}{m}} & \sqrt{\frac{k}{m}}& \cdots & \sqrt{\frac{k}{m}} \\
\vdots & \vdots & \vdots & \ddots & \vdots & \vdots & \ddots & \vdots \\
\sqrt{\frac{k}{m}} & \sqrt{\frac{k}{m}} & \sqrt{\frac{k}{m}} & \cdots  & 1- \lambda &1 & \cdots &1 \\
\sqrt{\frac{k}{m}} & \sqrt{\frac{k}{m}} & \sqrt{\frac{k}{m}} & \cdots  & 1 &1- \lambda & \cdots &1 \\
\vdots & \vdots & \vdots & \ddots & \vdots & \vdots & \ddots & \vdots \\
\sqrt{\frac{k}{m}} & \sqrt{\frac{k}{m}} & \sqrt{\frac{k}{m}} & \cdots  & 1 & 1 &\cdots &1 - \lambda \\
\end{vmatrix} \\
&=\begin{vmatrix}
1-\lambda & 0 & 0 & \cdots & \sqrt{\frac{k}{m}} & \sqrt{\frac{k}{m}} & \cdots & \sqrt{\frac{k}{m}} \\
\lambda - 1 & 1-\lambda & 0 & \cdots & 0 & 0 & \cdots & 0 \\
\lambda - 1 & 0 & 1-\lambda & \cdots & 0 & 0& \cdots & 0 \\
\vdots & \vdots & \vdots & \ddots & \vdots & \vdots & \ddots & \vdots \\
\sqrt{\frac{k}{m}} & \sqrt{\frac{k}{m}} & \sqrt{\frac{k}{m}} & \cdots  & 1- \lambda &1 & \cdots &1 \\
\sqrt{\frac{k}{m}} & \sqrt{\frac{k}{m}} & \sqrt{\frac{k}{m}} & \cdots  & 1 &1- \lambda & \cdots &1 \\
\vdots & \vdots & \vdots & \ddots & \vdots & \vdots & \ddots & \vdots \\
\sqrt{\frac{k}{m}} & \sqrt{\frac{k}{m}} & \sqrt{\frac{k}{m}} & \cdots  & 1 & 1 &\cdots &1 - \lambda \\
\end{vmatrix} \tag{Subtracting row $1$ from rows $2$ to $m/k$}\\
&=\begin{vmatrix}
1-\lambda & 0 & 0 & \cdots & \sqrt{\frac{k}{m}} & \sqrt{\frac{k}{m}} & \cdots & \sqrt{\frac{k}{m}} \\
\lambda - 1 & 1-\lambda & 0 & \cdots & 0 & 0 & \cdots & 0 \\
\lambda - 1 & 0 & 1-\lambda & \cdots & 0 & 0& \cdots & 0 \\
\vdots & \vdots & \vdots & \ddots & \vdots & \vdots & \ddots & \vdots \\
\sqrt{\frac{k}{m}} & \sqrt{\frac{k}{m}} & \sqrt{\frac{k}{m}} & \cdots  & 1- \lambda &1 & \cdots &1 \\
0 & 0 & 0 & \cdots  & \lambda &- \lambda & \cdots &0 \\
\vdots & \vdots & \vdots & \ddots & \vdots & \vdots & \ddots & \vdots \\
0 & 0 & 0 & \cdots  & \lambda & 0 &\cdots & - \lambda \\
\end{vmatrix} \tag{Subtracting row $m/k+1$ from rows $m/k+2$ to $|N|$}\\
&=\begin{vmatrix}
1-\lambda & 0 & 0 & \cdots & \sqrt{\frac{k}{m}} & \sqrt{\frac{k}{m}} & \cdots & \sqrt{\frac{k}{m}} \\
\lambda - 1 & 1-\lambda & 0 & \cdots & 0 & 0 & \cdots & 0 \\
\lambda - 1 & 0 & 1-\lambda & \cdots & 0 & 0& \cdots & 0 \\
\vdots & \vdots & \vdots & \ddots & \vdots & \vdots & \ddots & \vdots \\
\sqrt{\frac{m}{k}} & 0 & 0 & \cdots  & 1- \lambda &1 & \cdots &1 \\
0 & 0 & 0 & \cdots  & \lambda &- \lambda & \cdots &0 \\
\vdots & \vdots & \vdots & \ddots & \vdots & \vdots & \ddots & \vdots \\
0 & 0 & 0 & \cdots  & \lambda & 0 &\cdots & - \lambda \\
\end{vmatrix} \tag{Subtracting rows $2$ to $m/k$ times $\frac{1}{1-\lambda} \sqrt{\frac{k}{m}}$ from row $m/k+1$, assuming $\lambda \neq 1$}\\
&=\begin{vmatrix}
1-\lambda & 0 & 0 & \cdots & w\sqrt{\frac{k}{m}} & 0 & \cdots & 0 \\
\lambda - 1 & 1-\lambda & 0 & \cdots & 0 & 0 & \cdots & 0 \\
\lambda - 1 & 0 & 1-\lambda & \cdots & 0 & 0& \cdots & 0 \\
\vdots & \vdots & \vdots & \ddots & \vdots & \vdots & \ddots & \vdots \\
\sqrt{\frac{m}{k}} & 0 & 0 & \cdots  & 1- \lambda &1 & \cdots &1 \\
0 & 0 & 0 & \cdots  & \lambda &- \lambda & \cdots &0 \\
\vdots & \vdots & \vdots & \ddots & \vdots & \vdots & \ddots & \vdots \\
0 & 0 & 0 & \cdots  & \lambda & 0 &\cdots & - \lambda \\
\end{vmatrix} \tag{Adding rows $m/k+2$ to $|N|$ times $\frac{1}{\lambda} \sqrt{\frac{k}{m}}$ to row $1$, assuming $\lambda \neq 0$}\\
&=\begin{vmatrix}
1-\lambda & 0 & 0 & \cdots & w\sqrt{\frac{k}{m}} & 0 & \cdots & 0 \\
\lambda - 1 & 1-\lambda & 0 & \cdots & 0 & 0 & \cdots & 0 \\
\lambda - 1 & 0 & 1-\lambda & \cdots & 0 & 0& \cdots & 0 \\
\vdots & \vdots & \vdots & \ddots & \vdots & \vdots & \ddots & \vdots \\
\sqrt{\frac{m}{k}} & 0 & 0 & \cdots  & w- \lambda &0 & \cdots & \\
0 & 0 & 0 & \cdots  & \lambda &- \lambda & \cdots &0 \\
\vdots & \vdots & \vdots & \ddots & \vdots & \vdots & \ddots & \vdots \\
0 & 0 & 0 & \cdots  & \lambda & 0 &\cdots & - \lambda \\
\end{vmatrix} \tag{Adding rows $m/k+2$ to $|N|$ times $\frac{1}{\lambda}$ to row $1$, assuming $\lambda \neq 0$}\\
&=\begin{vmatrix}
1-\lambda - \frac{w}{w-\lambda} & 0 & 0 & \cdots & 0 & 0 & \cdots & 0 \\
\lambda - 1 & 1-\lambda & 0 & \cdots & 0 & 0 & \cdots & 0 \\
\lambda - 1 & 0 & 1-\lambda & \cdots & 0 & 0& \cdots & 0 \\
\vdots & \vdots & \vdots & \ddots & \vdots & \vdots & \ddots & \vdots \\
\sqrt{\frac{m}{k}} & 0 & 0 & \cdots  & w- \lambda &0 & \cdots & \\
0 & 0 & 0 & \cdots  & \lambda &- \lambda & \cdots &0 \\
\vdots & \vdots & \vdots & \ddots & \vdots & \vdots & \ddots & \vdots \\
0 & 0 & 0 & \cdots  & \lambda & 0 &\cdots & - \lambda \\
\end{vmatrix}. \tag{Subtracting row $m/k+1$ times $\frac{w}{w-\lambda}\sqrt{\frac{k}{m}}$ from row $1$, assuming $\lambda \neq 0$, assuming $\lambda \neq w$}\\
\end{align*}

Thus our characteristic polynomial reduces to $1-\lambda - \frac{w}{w-\lambda} = 0$, which has solutions $0$ and $w+1$. However we have assumed that $\lambda \neq 0$ and thus only $w+1$ is a a valid eigenvalue from this calculation. We got one non-zero eigenvalue. However, due to the fact that $A$ has rank $m/k$ we know that we should have an extra $m/k-1$ non-zero eigenvalues. In our calculations we have only assumed that $\lambda \neq w$ and $\lambda \neq 1$. It is easy to check that $\lambda = w$ is not an eigenvalue of our matrix thus the rest non-zero eigenvalues must be equal to $1$.

Now that we have successfully found the singular values of $A$ we can find $U$ by finding the eigenvectors of $AA^T$.

\begin{align*}
   \left( AA^T - I \lambda\right)v &= \begin{bmatrix}
\frac{1}{k}+\frac{w}{m} - \lambda & \frac{1}{k}+\frac{w}{m} & \cdots & \frac{w}{m} & \frac{w}{m} & \cdots & \frac{w}{m} \\
\frac{1}{k}+\frac{w}{m} & \frac{1}{k}+\frac{w}{m} - \lambda & \cdots & \frac{w}{m} & \frac{w}{m} & \cdots & \frac{w}{m} \\
\vdots & \vdots & \ddots & \vdots & \vdots & \ddots & \vdots \\
\frac{w}{m} & \frac{w}{m} & \cdots & \frac{1}{k}+\frac{w}{m} -\lambda & \frac{1}{k}+\frac{w}{m} & \cdots & \frac{w}{m}\\
\vdots & \vdots & \ddots & \vdots & \vdots & \ddots & \vdots \\
\frac{w}{m} & \frac{w}{m} & \cdots &\frac{w}{m} & \frac{w}{m} & \cdots & \frac{1}{k}+\frac{w}{m} -\lambda
\end{bmatrix} \\
&= \begin{bmatrix}
\frac{1}{k}+\frac{w}{m} - \lambda & \frac{1}{k}+\frac{w}{m} & \cdots & \frac{w}{m} & \frac{w}{m} & \cdots & \frac{w}{m} \\
\lambda &  - \lambda & \cdots & 0 & 0 & \cdots & 0 \\
\vdots & \vdots & \ddots & \vdots & \vdots & \ddots & \vdots \\
\frac{w}{m} & \frac{w}{m} & \cdots & \frac{1}{k}+\frac{w}{m} -\lambda & \frac{1}{k}+\frac{w}{m} & \cdots & \frac{w}{m}\\
\vdots & \vdots & \ddots & \vdots & \vdots & \ddots & \vdots \\
0 & 0 & \cdots & 0 & 0 & \cdots & -\lambda
\end{bmatrix}. \tag{Subtracting row $ (i-1) \cdot k+1, i \in [m/k]$ from rows $(i-1) \cdot k+1$ to  $i \cdot k$}
\end{align*}
From the above we can infer that for $\lambda \neq 0$ it must be the case that $v_i = v_j$, for $i,j \in [(z-1)\cdot k +1, z \cdot k], z \in [m/k]$. For ease of analysis, rename $x_i = v_{(i-1) \cdot k+1}$, $i \in [m/k]$. We have the following conditions:
\begin{align*}
    (1-\lambda)x_i + k\frac{w}{m} \sum_{j \in [m/k]} x_j = 0, \forall i \in [m/k].
\end{align*}
For $\lambda  = w+1$ the eigenvector is  the $x_i = \frac{1}{\sqrt{m}}, \forall i \in [m/k]$. Now for $\lambda = 1$ we need to find $m/k-1$ orthonormal eigenvectors. Consider the vectors such that $x_z = 0, z \in \{1, \cdots, i-1\}$, $x_i = - \sqrt{\frac{m-ik}{k(m-(i-1)k)}}$, and $x_j = \sqrt{\frac{k}{(m-ik)(m-(i-1)k)}}, j \in \{i+1, \cdots, m/k \}$, for $i \in [m/k-1]$. By construction, $v_i = v_j$, for $i,j \in [(z-1)\cdot k +1, z \cdot k], z \in [m/k]$. It is also easy to check that $\sum_{j \in [m/k]} x_j = 0$. The above two conditions imply that our vectors are indeed eigenvectors of our matrix for eigenvalue $\lambda = 1$. Finally, it is easy to check that the inner product of any two of these vectors is 0 and they are normal. Thus we have successfully computed $U$:
\begin{equation*}
    U = \begin{bmatrix}
        \frac{1}{\sqrt{m}} & - \sqrt{\frac{m-k}{km}} & 0 & \cdots & 0 & \cdots & 0\\
        \frac{1}{\sqrt{m}} & - \sqrt{\frac{m-k}{km}} & 0 & \cdots & 0 & \cdots & 0\\
        \vdots & \vdots & \vdots & \ddots & \vdots & \ddots & \vdots \\
        \frac{1}{\sqrt{m}} & \sqrt{\frac{k}{m-k}} &  - \sqrt{\frac{m-2k}{k(m-k)}} & \cdots & 0 &\cdots & 0 \\
        \vdots & \vdots & \vdots & \ddots & \vdots & \ddots & \vdots \\
        \frac{1}{\sqrt{m}} & \sqrt{\frac{k}{m-k}} &  \sqrt{\frac{k}{(m-2k)(m-k)}} & \cdots & - \sqrt{\frac{m-ik}{k(m-(i-1)k)}} &\cdots & 0  \\
        \vdots & \vdots & \vdots & \ddots & \vdots & \ddots & \vdots \\
         \frac{1}{\sqrt{m}} & \sqrt{\frac{k}{m-k}} &  \sqrt{\frac{k}{(m-2k)(m-k)}} & \cdots & \sqrt{\frac{k}{(m-ik)(m-(i-1)k)}} &\cdots & \frac{1}{\sqrt{2k}} 
    \end{bmatrix}.
\end{equation*}

To compute $\norm{U^T e_i}^2_2$ for any $i \in [m]$, we have:
\begin{align*}
    \norm{U^T e_i}^2_2 &= \frac{1}{m} + \sum_{ j \in \left[\floor{i/k}-1\right]} \frac{k}{(m-jk)(m-((j-1)k)} +  \frac{m-(\floor{i/k})k}{k(m-(\floor{i/k}-1)k)} \\
    &= \frac{1}{m} + \sum_{ j \in \left[\floor{i/k}-2\right]} \frac{k}{(m-jk)(m-((j-1)k)} + \frac{k}{(m-(\floor{i/k}-1)k)(m-((\floor{i/k}-2)k)} \\
    &\quad +  \frac{m-(\floor{i/k})k}{k(m-(\floor{i/k}-1)k)}\\
    &= \frac{1}{m} + \sum_{ j \in \left[\floor{i/k}-2\right]} \frac{k}{(m-jk)(m-((j-1)k)} + \frac{k^2 + (m-(\floor{i/k})k)(m-((\floor{i/k}-2)k)}{k(m-(\floor{i/k}-1)k)(m-((\floor{i/k}-2)k)}\\
    &= \frac{1}{m} + \sum_{ j \in \left[\floor{i/k}-2\right]} \frac{k}{(m-jk)(m-((j-1)k)} \\
    &\quad + \frac{k^2 + \left((m-(\floor{i/k}-1)k)-k\right)\left((m-((\floor{i/k}-1)k)+k\right)}{k(m-(\floor{i/k}-1)k)(m-((\floor{i/k}-2)k)}\\
    &= \frac{1}{m} + \sum_{ j \in \left[\floor{i/k}-2\right]} \frac{k}{(m-jk)(m-((j-1)k)} 
    + \frac{k^2 + \left(m-(\floor{i/k}-1)k\right)^2- k^2}{k(m-(\floor{i/k}-1)k)(m-((\floor{i/k}-2)k)}\\
    &= \frac{1}{m} + \sum_{ j \in \left[\floor{i/k}-2\right]} \frac{k}{(m-jk)(m-((j-1)k)} + \frac{\left(m-(\floor{i/k}-1)k\right)}{k(m-((\floor{i/k}-2)k)}\\
    &= \norm{U^T e_{i-k}}^2_2 \\
    &= \norm{U^T e_{1}}^2_2 \\
    &= \frac{1}{m} + \frac{m-k}{mk} \\
    &= \frac{1}{k}.
\end{align*}
\end{proof}

\begin{proof}[Proof of \Cref{lemma:maxmin example incoherence}]
    For ease of analysis, we will assume that $k$ perfectly divides $m$. it is straight forward to check that $A$ is $2$-rank. Notice that by the definition of $\mu_0$-incoherence (\Cref{definition:incoherence}), we simply need to show that $\max_{i \in [m]} \norm{U^T e_i}^2_2 = 1/k$. To do so, we first calculate the singular values of $A$. Since our matrix has rank 2, we know that we have 2 non-zero singular values which are equal to the square root of the non-zero eigenvalues of $A^TA$. To find those eigenvalues we need to solve the equation $det(A^TA - I\lambda) = 0$. 

\begin{align*}
    det(A^TA - I\lambda) &= \begin{vmatrix}
1-\lambda & \sqrt{\frac{k}{m}} & \cdots & \sqrt{\frac{k}{m}}\\
\sqrt{\frac{k}{m}} & 1 -\lambda& \cdots & 1\\
\vdots & \vdots & \ddots & \vdots \\
\sqrt{\frac{k}{m}} & 1 & \cdots & 1-\lambda
\end{vmatrix} \\
&=\begin{vmatrix}
1-\lambda & \sqrt{\frac{k}{m}} & \sqrt{\frac{k}{m}} & \cdots & \sqrt{\frac{k}{m}} &\sqrt{\frac{k}{m}}\\
0 &  -\lambda&0& \cdots & 0 &\lambda\\
\vdots & \vdots& \vdots & \ddots & \vdots& \vdots \\
0 &  0&0& \cdots & -\lambda &\lambda\\
\sqrt{\frac{k}{m}} & 1 & 1 & \cdots & 1 & 1-\lambda
\end{vmatrix} \tag{Subtracting row $n$ from rows $2$ to $n-1$}\\
&=\begin{vmatrix}
1-\lambda & 0 & 0 & \cdots & 0 & (n-1)\sqrt{\frac{k}{m}}\\
0 &  -\lambda&0& \cdots & 0 &\lambda\\
\vdots & \vdots& \vdots & \ddots & \vdots& \vdots \\
0 &  0&0& \cdots & -\lambda &\lambda\\
\sqrt{\frac{k}{m}} & 1 & 1 & \cdots & 1 & 1-\lambda
\end{vmatrix} \tag{Adding rows $2$ to $n-1$ times $\frac{1}{\lambda} \sqrt{\frac{k}{m}}$ from row $1$, assuming $\lambda \neq 0$}\\
&=\begin{vmatrix}
1-\lambda & 0 & 0 & \cdots & 0 & (n-1)\sqrt{\frac{k}{m}}\\
0 &  -\lambda&0& \cdots & 0 &\lambda\\
\vdots & \vdots& \vdots & \ddots & \vdots& \vdots \\
0 &  0&0& \cdots & -\lambda &\lambda\\
\sqrt{\frac{k}{m}} & 0 & 0 & \cdots & 0 & n-1-\lambda
\end{vmatrix} \tag{Adding rows $2$ to $n-1$ times $\frac{1}{\lambda}$ from row $n$, assuming $\lambda \neq 0$}\\
&=\begin{vmatrix}
1-\lambda & 0 & 0 & \cdots & 0 & (n-1)\sqrt{\frac{k}{m}}\\
0 &  -\lambda&0& \cdots & 0 &\lambda\\
\vdots & \vdots& \vdots & \ddots & \vdots& \vdots \\
0 &  0&0& \cdots & -\lambda &\lambda\\
0 & 0 & 0 & \cdots & 0 & n-1-\lambda - (n-1)\frac{k}{m}\frac{1}{1-\lambda}
\end{vmatrix} \tag{Subtracting rows $1$ times $\frac{1}{1-\lambda}\sqrt{\frac{k}{m}}$ from row $n$, assuming $\lambda \neq 1$}\\
&= (1-\lambda)(-\lambda)^{n-1}\left(n-1-\lambda - (n-1)\frac{k}{m}\frac{1}{1-\lambda}\right). \tag{Determinant of triangular matrix}
\end{align*}
Since we have assumed that $\lambda \neq 1$ and $\lambda \neq 0$, our equation $det(A^TA - I\lambda) = 0$ is equivalent to $n-1-\lambda - (n-1)\frac{k}{m}\frac{1}{1-\lambda} = 0$. The solutions to this equation are:
\[\lambda_{1,2} = \frac{n \pm \sqrt{n^2-4(n-1)\left(1- \frac{k}{m}\right)}}{2}\]
One useful property of these $\lambda$ is that $\left((\lambda_1-1)(\lambda_2-1)\right)^2(m-k)^2 = k^2(\lambda_1 \lambda_2)^2$. To see this we have:
\begin{align*}
    \left((\lambda_1-1)(\lambda_2-1)\right)^2(m-k)^2 & = \left((m-k)(\lambda_1 \lambda_2 -(\lambda_1 + \lambda_2)+1)\right)^2 \\
    &= \left((m-k)\left((n-1)\left(1-\frac{k}{m}\right) -n+1\right)\right)^2 \\
    &= \left((m-k)(n-1)\frac{k}{m}\right)^2 \\
    &= k^2 \left(\left(1-\frac{k}{m}\right)(n-1)\right)^2\\
    &= k^2 (\lambda_1 \lambda_2)^2
\end{align*}

Now that we have calculated the singular values of our matrix we can calculate matrix $U$. Since $A$ has rank $2$ we will only need the two leftmost vectors of $U$. These will be the non-trivial, unit-length vectors that are in the null-space of $AA^T - I \lambda_{1,2}$. For a vector $v$ such that $\left( AA^T - I \lambda_{1}\right)v = 0$ we have that:

\begin{align*}
   \left( AA^T - I \lambda_{1}\right)v &= \begin{bmatrix}
\frac{n-1}{m}+\frac{1}{k}-\lambda & \frac{n-1}{m}+\frac{1}{k} & \cdots &\frac{n-1}{m} & \cdots & \frac{n-1}{m}\\
\frac{n-1}{m}+\frac{1}{k} & \frac{n-1}{m}+\frac{1}{k} -\lambda& \cdots& \frac{n-1}{m} & \cdots & \frac{n-1}{m}\\
\vdots & \vdots & \ddots & \vdots & \ddots & \vdots \\
\frac{n-1}{m} & \frac{n-1}{m}& \cdots& \frac{n-1}{m} -\lambda & \cdots & \frac{n-1}{m}\\
\vdots & \vdots & \ddots & \vdots & \ddots & \vdots \\
\frac{n-1}{m} & \frac{n-1}{m}& \cdots& \frac{n-1}{m} & \cdots & \frac{n-1}{m}-\lambda\\
\end{bmatrix} v \\
&= \begin{bmatrix}
\frac{1}{k}-\lambda & \frac{1}{k} & \cdots & 0 & \cdots &\lambda\\
\frac{1}{k} & \frac{1}{k} -\lambda& \cdots& 0 & \cdots & \lambda\\
\vdots & \vdots & \ddots & \vdots & \ddots & \vdots \\
0 & 0 & \cdots& -\lambda & \cdots & \lambda\\
\vdots & \vdots & \ddots & \vdots & \ddots & \vdots \\
\frac{n-1}{m} & \frac{n-1}{m}& \cdots& \frac{n-1}{m} & \cdots & \frac{n-1}{m}-\lambda\\
\end{bmatrix} v  \tag{Subtracting row $m$ from rows $1$ to $m-1$}\\
&= \begin{bmatrix}
\frac{1}{k}-\lambda & \frac{1}{k} & \cdots & \frac{1}{k} & 0 & \cdots &\lambda\\
\lambda & -\lambda& \cdots& 0 & 0 & \cdots & 0\\
\vdots & \vdots & \ddots & \vdots & \vdots & \ddots & \vdots \\
\lambda & 0& \cdots& -\lambda & 0 & \cdots & 0\\
0 & 0 & \cdots& 0 &-\lambda & \cdots & \lambda\\
\vdots & \vdots & \ddots & \vdots & \vdots & \ddots & \vdots \\
\frac{n-1}{m} & \frac{n-1}{m}& \cdots& \frac{n-1}{m} &\frac{n-1}{m} & \cdots & \frac{n-1}{m}-\lambda\\
\end{bmatrix} v  \tag{Subtracting row $1$ from rows $2$ to $k$}\\
\end{align*}
Based on the above we can see that due to rows $2$ to $k$ we have that $v_i = v_j$ for $1 \le i,j \le k$. Equivalently, due to rows $k+1$ to $m-1$ we have that $v_i = v_j$ for $k+1 \le i,j \le m-1$. From row $1$ and our previous observations we have that $(1-\lambda)v_1 +\lambda v_{m} = 0$. We also know that $\norm{v}_2=1$ and thus $k v_1^2+ (m-k) v_m^{2} =1$. Combining the two equations we get that:
\begin{align*}
    v_1 &= \frac{\lambda}{\sqrt{k\lambda^2 +(\lambda-1)^2(m-k)}} \\
    v_m &= \frac{\lambda-1}{\sqrt{k\lambda^2 +(\lambda-1)^2(m-k)}}
\end{align*}
Since our singular values are positive reals we have that $v_1^2 \ge v_m^2$. This means that 
\begin{align*}
    &\max_{i \in [m]} \norm{U^T e_i}^2_2 = \norm{U^T e_1}^2_2 \\
    &= \frac{\lambda_1^2}{k\lambda_1^2 +(\lambda_1-1)^2(m-k)}+\frac{\lambda_2^2}{k\lambda_2^2 +(\lambda_2-1)^2(m-k)}\\
    &=\frac{\lambda_1^2 \left( k \lambda_2^2 + (\lambda_2-1)^2(m-k) \right)+ \lambda_2^2 \left( k \lambda_1^2 + (\lambda_1-1)^2(m-k) \right)}{\left (k\lambda_1^2 +(\lambda_1-1)^2(m-k) \right) \left( k\lambda_2^2 +(\lambda_2-1)^2(m-k)\right)}\\
    &=\frac{2k (\lambda_1\lambda_2)^2 + (\lambda_1\lambda_2-\lambda_1)^2(m-k) + (\lambda_2\lambda_1-\lambda_2)^2(m-k) }{k^2 (\lambda_1\lambda_2)^2 + k(\lambda_1\lambda_2-\lambda_1)^2(m-k) + k(\lambda_2\lambda_1-\lambda_2)^2(m-k)+ \left((\lambda_1-1)(\lambda_2-1)\right)^2(m-k)^2}\\
    &=\frac{2k (\lambda_1\lambda_2)^2 + (\lambda_1\lambda_2-\lambda_1)^2(m-k) + (\lambda_2\lambda_1-\lambda_2)^2(m-k) }{2k^2 (\lambda_1\lambda_2)^2 + k(\lambda_1\lambda_2-\lambda_1)^2(m-k) + k(\lambda_2\lambda_1-\lambda_2)^2(m-k)}\\
    &= \frac{1}{k}.
\end{align*}
\end{proof}

\begin{proof}[Proof of~\Cref{thm: first price lower bound p is 1}]
Fix an integer $\ell$, and pick the number of items $m$ such that $\ell$ perfectly divides $m$. Let $k = m/\ell$. We construct an instance with $C=m$ centers, where each center serves $\ell = m/k$ individuals. Without loss of generality let $g^{\Mcal}_1(\ell, \cdots, \ell) \in \argmin_{c \in [C]} g^{\Mcal}_c(\ell, \cdots, \ell)$. 

For center one, each individual $i \in I_1$ values only items $(i-1)k+1, \cdots , ik$, and she values them equally (i.e., $v_{(1,i)} = \left[0, \cdots ,\frac{1}{\sqrt{k}}, \frac{1}{\sqrt{k}}, \cdots, 0 \right]^T$). For all remaining centers $c \in [C]\setminus \{1\}$, each individual $i \in I_c$, values all items equally (i.e. $v_{(c,i)} = \left[\frac{1}{\sqrt{m}}, \cdots , \frac{1}{\sqrt{m}} \right]^T$). The induced valuation matrix is precisely the valuation matrix of~\Cref{lemma:welfare example incoherence}; therefore its rank is exactly $r=m/k$ and its incoherence is $\mu_0 = 1$, and thus, indeed, $\ell = m/k$, for the induced valuation matrix. 

In the optimal allocation, center one receives all items, for a welfare of $m/\sqrt{k}$. Now, assume that there exists an equilibrium where the sum of fractions of items center one receives is strictly more than one (based on the tie-breaking rule, centers that are tied for the highest bid can receive fractional allocations). Then, there must exist a center $c^* \in [C] \setminus \{ 1 \}$ that receives a sum of fractions that is strictly less than one. Also, it must be the case that center one has a strictly positive bid on at least two items (by the ``zero bids imply no allocation'' assumption); let $j$ be an item that center one receives (potentially fractionally). Center $c^*$ could bid $g^{\Mcal}_c(\ell, \cdots, \ell) \geq g^{\Mcal}_1(\ell, \cdots, \ell) > b_{(j,1)}$ for item $j$ and receive it entirely, contradicting the equilibrium property.
This implies that at any equilibrium, the sum of fractions of items that center one receives is at most one, and thus the maximum welfare attained at an equilibrium is at most $\sqrt{m} +\frac{1}{\sqrt{k}} - \frac{1}{\sqrt{m}}$. Therefore, the Price of Anarchy is at least  $\frac{m/\sqrt{k}}{\sqrt{m} +\frac{1}{\sqrt{k}} - \frac{1}{\sqrt{m}}} \in \Omega(\sqrt{\frac{m}{k}}) = \Omega(\sqrt{\ell})$. 
\end{proof}

\begin{proof}[Proof of~\Cref{lemma: technical bound}]
     It suffices to prove that
     $\hat{y} \in \argmax_{y: \sum_{i \in [k]} y_i \le m} \left(\sum_{i \in [k]} (y_i)^p \right)^{1/p}$, where $\hat{y}_i = \frac{m}{k}$, since \begin{equation*}
        \left( \sum_{i \in [k]} (\hat{y}_i)^p\right)^{1/p} =  \left( \sum_{i \in [k]} \left(\frac{m}{k}\right)^p\right)^{1/p}= \frac{m}{k^{1-1/p}}.
    \end{equation*} 
    We will prove that for any $\tilde{y}$ that is not equal to $\hat{y}$ pointwise, there is ``local improvement,'' with respect to the objective $\left( \sum_{i \in [k]} \left( y_{i}\right)^{p} \right)^{1/p}$. Consider any $\tilde{y}$ such that there exist $i', i'' \in [k]$, such that $\tilde{y}_{i'} - \tilde{y}_{i''} > \epsilon$. Let $z$ be a vector such that $\forall i \in [k]: i \neq i'', i \neq i'$, $z_i = \tilde{y}_i$, $z_{i'} = \tilde{y}_{i'} -\epsilon/2$, and $z_{i''} = \tilde{y}_{i''} +\epsilon/2$. Then, we have that
    \begin{align*}
        \sum_{i \in [k]} (z_i)^p - \sum_{i \in [k]} (\tilde{y}_i)^p &= (\tilde{y}_{i''} +\epsilon/2)^p - (\tilde{y}_{i''})^p + ( \tilde{y}_{i'} -\epsilon/2)^p - (\tilde{y}_{i'})^p  \\ 
        &= \epsilon/2 \left( \frac{(\tilde{y}_{i''} +\epsilon/2)^p - (\tilde{y}_{i''})^p}{\epsilon/2} \right) - \epsilon/2 \left( \frac{(\tilde{y}_{i'})^p- (\tilde{y}_{i'} -\epsilon/2)^p}{\epsilon/2} \right).
    \end{align*}
   The \textit{Mean Value Theorem}, states that for any function $f: \R \rightarrow \R$ that is continuous on $[\alpha, \beta]$ and differentiable on $(\alpha, \beta)$, there exists a point $c \in (\alpha, \beta)$ such that $\frac{f(x)}{dx} \Bigr|_{x = c} = \frac{f(\beta) - f(\alpha)}{\beta - \alpha}$. Therefore, using the Mean Value Theorem for the function $f(x) = x^p$, and $p \in (0,1]$, we have
        \begin{align*}
        \sum_{i \in [k]} (z_i)^p - \sum_{i \in [k]} (\tilde{y}_i)^p &=\epsilon/2 \left ( p \, (c_{i''})^{p-1} -  p \, (c_{i'})^{p-1} \right) \tag{Mean Value Theorem}\\
        &\ge 0, \tag{$c_{i'} > c_{i''}, p \in (0,1]$}
    \end{align*}
    where $c_{i'} > c_{i''}$ since, (i) $c_{i'} \in (\tilde{y}_{i'}-\epsilon/2, \tilde{y}_{i'})$, by the Mean Value Theorem, (ii) $c_{i''} \in (\tilde{y}_{i''}, \tilde{y}_{i''} +\epsilon/2)$, by the Mean Value Theorem, and (iii) $\tilde{y}_{i'} - \tilde{y}_{i''} > \epsilon)$, by definition.
    
    For $p>0$ and $ \sum_{i \in [k]} (z_i)^p \ge \sum_{i \in [k]} (\tilde{y}_i)^p$, we also have that $\left( \sum_{i \in [k]} (z_i)^p\right)^{1/p} \ge \left( \sum_{i \in [k]} (\tilde{y}_i)^p\right)^{1/p}$. 
    For $p < 0$, in the above argument we would instead have $\sum_{i \in [k]} (z_i)^p \leq \sum_{i \in [k]} (\tilde{y}_i)^p$, which also implies that $\left( \sum_{i \in [k]} (z_i)^p\right)^{1/p} \ge \left( \sum_{i \in [k]} (\tilde{y}_i)^p\right)^{1/p}$. This concludes the proof of~\Cref{lemma: technical bound}.
\end{proof}

\begin{proof}[Proof of \Cref{theorem: maximin lower bound}]

 Choose any  $\mu_0 \, r \ge 2$ and a number of items $m$. Let $k = \frac{m}{\mu_0 \cdot r}$. Also let the number of total individuals $|N|$ be such that $|N| \ge \frac{m-k}{\sqrt{mk}}+1$. We will only have two centers. Center 1 has only one individual ($|I_1| = 1$) which values only items $1, \cdots , k$ equally ($v^T_{1,1} = \left[\frac{1}{\sqrt{k}}, \frac{1}{\sqrt{k}}, \cdots, 0 \right]$). Center 2 has $|N|-1$ individuals. Each individual values all items equally ($\forall i \in I_2, v^T_{2,i} = \left[\frac{1}{\sqrt{m}}, \cdots,  \frac{1}{\sqrt{m}}\right]$). Notice that due to \Cref{lemma:maxmin example incoherence}, indeed has the desired \textit{``incoherence times rank''} value.

To prove the lower bound we will first prove a lower bound on $GMW_p(y^*)$ and then provide one natural equilibrium of the derived game. Consider the following allocation $\hat{y}$ where we give items $1, \cdots, k$ to center $1$ and the rest of the items to center $2$. Then, we would have that $u_{1,1}(\hat{y}) = \sqrt{k}$ and $\forall i \in I_2, u_{2,i}(\hat{y}) = \frac{m-k}{(|N|-1)\sqrt{m}} \ge \frac{m-k}{|N|\sqrt{m}}$. However, recall that we chose $|N| \ge \frac{m-k}{\sqrt{km}}+1$ and thus $u_{1,1}(\hat{x}) \ge u_{2,i}(\hat{x})$. Thus,
\[GMW_p(y^*) \ge GMW_p(\hat{y}) = u_{2,i}(\hat{y}) \ge \frac{m-k}{|N|\sqrt{m}}.\]
On the other hand we can see that all individuals equally sharing items $1, \cdots, k$ and center $2$ getting the rest items is an equilibrium. In this equilibrium $GMW_p(\tilde{y}) = u_{1,1}(\tilde{y}) = \frac{\sqrt{k}}{|N|}$. Using the above we have that:
\[\frac{GMW_p(y^*)}{GMW_p(\tilde{y})} \ge \frac{m-k}{\sqrt{mk}} = \sqrt{\frac{m}{k}} - \sqrt{\frac{k}{m}} = \sqrt{\mu_0 r} - \frac{1}{\sqrt{\mu_0 r}} = \Omega\left(\sqrt{\ell}\right).\]

\end{proof}

\begin{proof}[Proof of \Cref{theorem:nsw poa}]
When $p = 0$, $GMW_0(y) = NSW(y) = \left(\prod_{c \in [C]} \prod_{i \in I_c} u_{c,i}(y) \right)^{1/|N|}$. Let $\tilde{y}$ be the allocation at equilibrium and $\tilde{b}$ be the corresponding bids. Let $\tilde{p}_j = \sum_{c \in [C]} \sum_{i \in I_c} \tilde{b}_{j,(c,i)}$, and $(\tilde{p}_{-c})_j = \tilde{p}_j - \sum_{i \in I_c} \tilde{b}_{j,(c,i)}$.

 We will show that for each center $c \in [C]$, there exists a deviation $b'_c$ such that the resulting (given that the rest of the centers bid their equilibrium bid) allocation satisfies $y_{j,(c,i)}' = \frac{y^*_{j,(c,i)}}{\beta_{c}}$ for all $i \in I_c$, $j \in [m]$, and for some $\beta_{c}$, where recall that $y^* \in \argmax_{y \in \Fcal} GMW_{0}(y)$.
Similar to the proof of \Cref{theorem: trading post upper bound}, consider the following function:
\[f_c(x) = \sum_{j \in [m]} \frac{(\tilde{p}_{-c})_j \sum_{i \in I_c} y^*_{j,(c,i)}}{x - \sum_{i \in I_c} y^*_{j,(c,i)}}.\]
In the interval $\left(\max_{j \in [m]}\left\{\sum_{i \in I_c} y^*_{j,(c,i)} \right\},\infty\right)$, $f_c(x)$ is continuous, decreasing and 
\[\lim_{x \rightarrow \max_{j \in [m]}\left\{\sum_{i \in I_c} y^*_{j,(c,i)} \right\}^+} f_c(x) = \infty \text{ , } \lim_{x \rightarrow \infty } f_c(x) = 0.\]
Thus, there exists a point $\beta_c  > \max_{j \in [m]}\left\{\sum_{i \in I_c} y^*_{j,(c,i)} \right\}$ such that $f_c(\beta_c) = B_c$. 

Consider the following deviation for center $c$. The bid of individual $i \in I_c$ for item $j \in [m]$ is $b'_{j,(c,i)} = \frac{y^*_{j,(c,i)} (\tilde{p}_{-c})_j}{\beta_c - \sum_{i' \in I_c} y^*_{j,(c,i')}}$. We prove that this is a valid bidding strategy, i.e., (i) all bids are non-negative, and (ii) the sum of bids is less or equal to the budget $B_c$. Condition (i) is implied by the fact that $\beta_c > \max_{j \in [m]}\left\{\sum_{i \in I_c} y^*_{j,(c,i)} \right\}$. Regarding condition (ii) we have that:

\begin{align*}
    \sum_{j \in [m]} \sum_{i \in I_c} b'_{j,(c,i)} &=  \sum_{j \in [m]} \sum_{i \in I_c} \frac{y^*_{j,(c,i)} (\tilde{p}_{-c})_j}{\beta_c - \sum_{i' \in I_c} y^*_{j,(c,i')}} = \sum_{j \in [m]}  \frac{ (\tilde{p}_{-c})_j \sum_{i \in I_c} y^*_{j,(c,i)}}{\beta_c - \sum_{i' \in I_c} y^*_{j,(c,i')}} = f_c(\beta_c) = B_c.
\end{align*}

Let $y'$ be the allocation of the trading post mechanism when center $c$ bids according to $b'_{j,(c,i)}$ and all other centers bid according to $\tilde{b}$. We will prove that $y'_{j,(c,i)} = \frac{y^*_{j,(c,i)}}{\beta_c}$, for all $i \in I_c$, $j \in [m]$. 
\begin{align*}
    y'_{j,(c,i)} &= \frac{b'_{j,(c,i)}}{(\tilde{p}_{-c})_j + \sum_{i' \in I_c} b'_{j,(c,i')}}\\
    &= \frac{\frac{y_{j,(c,i)} (\tilde{p}_{-c})_j}{\beta_c - \sum_{i' \in I_c} y_{j,(c,i')}}}{ (\tilde{p}_{-c})_j + \sum_{i' \in I_c}  \frac{y_{j,(c,i')} (\tilde{p}_{-c})_j}{\beta_c - \sum_{i'' \in I_c} y_{j,(c,i'')}}}\\
    &=\frac{\frac{y_{j,(c,i)} }{\beta_c - \sum_{i' \in I_c} y_{j,(c,i')}}}{ 1 +   \frac{ \sum_{i' \in I_c} y_{j,(c,i')}}{\beta_c - \sum_{i' \in I_c} y_{j,(c,i')}}}\\
    &= \frac{y_{j,(c,i)}}{\beta_c - \sum_{i' \in I_c} y_{j,(c,i')} + \sum_{i' \in I_c} y_{j,(c,i')}} \\
    &= \frac{y_{j,(c,i)}}{\beta_c}
\end{align*}

Since the utilities of the individuals are additive we have that $u_{c,i}(y') = \frac{1}{\beta_c} u_{c,i}(y^*)$  and the utility of the center is,
\[u_c(y') = \left(\prod_{i \in I_c} u_{c,i}(y')\right)^{1/|I_c|} = \left(\prod_{i \in I_c}\frac{1}{\beta_c} u_{c,i}(y^*)\right)^{1/|I_c|} = \left(\frac{1}{\beta_{i}^{|I_c|}}\prod_{i \in I_c} u_{c,i}(y^*)\right)^{1/|I_c|} = \frac{1}{\beta_c} u_{i}(y^*)\]
Since $\tilde{y}$ is an equilibrium, bidding strategy $y'$ should not yield more utility and thus $u_c(\tilde{y}) \ge u_c(y') = \frac{1}{\beta_c} u_{i}(y^*)$. Now from the definition of our mechanism and what we proved earlier we have that
\begin{equation}
\label{equation:2}
 \frac{1}{\beta_c}\sum_{j \in [m]} \sum_{i \in I_c} y_{j,(c,i)}^*\left(\tilde{p}_{-c})_j + \sum_{i' \in I_c} b_{j,(c,i')}'\right) = \sum_{j \in [m]} \sum_{i \in I_c} y_{j,(c,i)}'\left(\tilde{p}_{-c})_j + \sum_{i' \in I_c} b_{j,(c,i')}'\right) = \sum_{j \in [m]} \sum_{i \in I_c} b_{j,(c,i)}' = B_c   
\end{equation}
On the other hand we have that:
\begin{align*}
    \sum_{c \in [C]} \sum_{j \in [m]} \sum_{i \in I_c} y_{j,(c,i)}^*\left(\tilde{p}_{-c})_j + \sum_{i' \in I_c} b_{j,(c,i')}'\right) &=   \sum_{c \in [C]} \sum_{j \in [m]} \sum_{i \in I_c}y_{j,(c,i)}^*(\tilde{p}_{-c})_j + \sum_{c \in [C]} \sum_{j \in [m]} \sum_{i \in I_c}y_{j,(c,i)}^* \sum_{i' \in I_c} b_{j,(c,i')}'\\
    &\le \sum_{j \in [m]} \sum_{c \in [C]} \sum_{i \in I_c} y_{j,(c,i)}^* \tilde{p}_j +  \sum_{j \in [m]} \sum_{c \in [C]} \sum_{i' \in I_c} b_{j,(c,i')}'  \sum_{i \in I_c} y_{j,(c,i)}^* \tag{$(\tilde{p}_{-c})_j \le \tilde{p}_j$}\\
    &\le \sum_{j \in [m]} \sum_{c \in [C]} \sum_{i \in I_c} y_{j,(c,i)}^* \tilde{p}_j +  \sum_{j \in [m]} \sum_{c \in [C]} \sum_{i' \in I_c} b_{j,(c,i')}' \tag{$\sum_{i \in I_c} y_{j,(c,i)}^* \le 1$}\\
    &= \sum_{j \in [m]} \tilde{p}_j \sum_{c \in [C]} \sum_{i \in I_c} y_{j,(c,i)}^* + \sum_{c \in [C]} B_c \\
    &\le \sum_{j \in [m]} \tilde{p}_j + B \tag{$\sum_{c \in [C]} \sum_{i \in I_c} y_{j,(c,i)}^* \le 1$}\\
    &= \sum_{j \in [m]} \sum_{c \in [C]} \sum_{i \in I_c} \tilde{b}_{c,i}^{j} + B \tag{$\tilde{p}_j = \sum_{c \in [C]} \sum_{i \in I_c} \tilde{b}_{j,(c,i)}$} \\
    &\le 2B \tag{$\sum_{j \in [m]} \sum_{i \in I_c} \tilde{b}_{j,(c,i)} \le B_c$}
\end{align*}
Combining \cref{equation:2} with the above we have that:
\[\sum_{c \in [C]} \beta_c B_c \le 2B\]
Thus we can conclude that:
\[\frac{\left(\prod_{c \in [C]} u^{B_c}_i(y^*)\right)^{1/B}}{\left(\prod_{c \in [C]} u^{B_c}_i(\tilde{y})\right)^{1/B}} = \prod_{c \in [C]} \left( \frac{u_c(y^*)}{u_c(\tilde{y})}\right)^{B_c/B} \le \prod_{c \in [C]} \left( \beta_c \right)^{B_c/B} \le \frac{1}{B} \sum_{c \in [C]} \beta_c B_c \le 2\]
Finally notice that if we set $B_c = |I_c|$ then $B = |N|$ and for any allocation $y$ we have that:
\[\left(\prod_{c \in [C]} u^{B_c}_c(y)\right)^{1/B} = \left(\prod_{c \in [C]} \left(\prod_{i \in I_c} u_{c,i}(y) \right)^{\frac{B_c}{|I_c|}}\right)^{1/B} = \left(\prod_{c \in [C]} \prod_{i \in I_c} u_{c,i}(y) \right)^{1/|N|}\]
By combining the last two inequalities we have that:
\[\frac{NSW(y^*)}{NSW(\tilde{y})} \le 2.\]
\end{proof}

\end{document}